\documentclass[aps,prx,reprint,twocolumn,superscriptaddress,floatfix,nofootinbib,longbibliography]{revtex4-1}
\usepackage{epsfig,amsmath,amssymb,color,comment,physics}
\usepackage[makeroom]{cancel}
\usepackage[caption=false]{subfig}
\usepackage{float}
\usepackage[countmax]{subfloat}
\usepackage[normalem]{ulem}
\usepackage[english]{babel}
\usepackage{dsfont}
\usepackage{braket}
\usepackage[bookmarks=true,colorlinks,linkcolor=OrangeRed,urlcolor=NavyBlue,citecolor=RoyalBlue]{hyperref}
\usepackage[dvipsnames]{xcolor}
\usepackage{graphicx}
\usepackage{cleveref}
\graphicspath{{./Figures/}}
\usepackage{amsmath, amssymb,latexsym,amsfonts}
\usepackage{orcidlink}
\usetikzlibrary{patterns}
\usepackage{bm}
\usepackage{hyperref}
\usepackage{quantikz}
\usepackage{tikz}

\usepackage{amsthm} 
\theoremstyle{plain}
\newtheorem{theorem}{Theorem}[section]

\theoremstyle{definition}
\newtheorem{definition}[theorem]{Definition}

\usepackage{tabularray}
\usepackage[export]{adjustbox}

\begin{document}
\title{Spectral Fingerprints of Gauge Theories on a Quantum Computer}

\author{Graham Van Goffrier}
\email{gwvg1e23@soton.ac.uk}
\affiliation{School of Physics and Astronomy, University of Southampton, University Road, SO17 1BJ, UK}
\affiliation{Infleqtion UK}

\author{Debasish Banerjee${}^{\orcidlink{0000-0003-0244-4337}}$}
\email{D.Banerjee@soton.ac.uk}
\affiliation{School of Physics and Astronomy, University of Southampton, University Road, SO17 1BJ, UK}

\author{Bipasha Chakraborty}
\email{B.Chakraborty@soton.ac.uk}
\affiliation{School of Physics and Astronomy, University of Southampton, University Road, SO17 1BJ, UK}

\author{Emilie Huffman${}^{\orcidlink{0000-0002-4417-338X}}$}
\email{ehuffman@wfu.edu}
\affiliation{Department of Physics and Center for Functional Materials, 
Wake Forest University, Winston-Salem, North Carolina 27109, USA}

\begin{abstract}
  {\bf Abstract}
  Maximally mixed state spectral sampling is an unbiased quantum algorithm that allows for extraction of a finite-resolution spectral distribution from a Hamiltonian over potentially the entire allowed range of energies. We show how it may be focused on any desired area of the spectrum in order to learn about the full \textit{fingerprint} of the model of interest: from its ground state phenomena such as quantum criticality, obtained from the lowest lying energies, to its thermalization behavior, obtained from the mid-spectrum. We demonstrate this technique specifically on a $(1+1)d$ non-Abelian $SO(3)$ gauge theory, providing a comprehensive analysis of the steps necessary for performing this algorithm, as well as what is possible in the near-term with superconducting quantum hardware, performing simulations with circuits that are $78$ two-qubit gates deep. We show how this algorithm is able to take advantage of emerging dynamical circuit capabilities in near-term hardware to roughly halve the number required qubits, as well as how quantum readout error mitigation is trivial for this method. Along the way, we propose a novel strategy for compiling the controlled-time evolutions needed for spectral sampling by means of Pauli-frame optimizations. We illustrate two physical applications of quantum spectral sampling -- disordered many-body transitions, and mid-spectrum densities of states -- and what postprocessing steps they require beyond the Fourier outputs of the algorithm.

\end{abstract}

\date{\today}
\maketitle

\tableofcontents
\section{Introduction}
 The simulation of quantum many-body systems is one of the most natural applications 
of a quantum computer. Exploratory and advanced efforts in condensed-matter, nuclear 
and high-energy physics have focused on preparing ground states
\cite{Bauer2016,Dumitrescu2018,Motta2019,Farrell2024,Peruzzo2013,Maiti2024,Stetcu2021,Chakraborty2020,
Ciavarella2021,Schuster2023,Lumia2021}, determining a small number of low-lying excitations 
\cite{McClean2016,Higgott2018,Nakanishi2019,Atas2021,Guo2024,Rosanowski:2025fhr}, and following the real-time 
evolution of physically motivated initial states \cite{Smith2019,Roggero2019,Martinez2016,
deJong2021,DAlessio2015,Huffman2021,Nagano2023,Hall2021,Ciavarella2020}. These are 
important goals, but they often can only be tailored to obtain at most a few states of the spectrum of the Hamiltonian (typically the lowest lying energetic states).
The complete energy spectrum, or a suitably coarse-grained representation 
of it, contains complementary information which is important for finite-temperature physics, 
such as the equation of state or the process of thermalization. Moreover, for conformal field theories on spherical geometries, it contains their scaling dimensions.\cite{Cardy1985,Zhu2023} Unbiased methods for full spectral energy extraction are therefore desirable in order to obtain the full \textit{fingerprint} of the theory of interest.

 The full energy spectrum of a theory is encoded in its density of states (DOS). For a Hamiltonian $H$ with 
distinct eigenvalues, $E_\alpha$, and degeneracies, $d_\alpha$, the DOS is defined as
\begin{equation} \label{eq:dos}
    \varrho_H(E) = \Tr~\left[\delta(E-H)\right] = \sum_\alpha d_\alpha~\delta(E-E_\alpha).
\end{equation}
Knowledge of $\varrho_H(E)$ gives direct access to the canonical partition function,
\begin{equation} \label{eq:Zfromdos}
    Z(\beta) = \Tr~\left(e^{-\beta H}\right) = \int dE\,\varrho_H(E)e^{-\beta E},
\end{equation}
and hence to equilibrium thermodynamic observables. At a more microscopic level, 
the distribution of level spacings can distinguish integrable from quantum-chaotic 
dynamics \cite{Oganesyan2007,Pal2010,Atas2013}, while anomalous structures within 
the spectrum may reveal Hilbert-space fragmentation, many-body scars, or the onset 
of localization \cite{Turner2017}. These questions are 
particularly interesting in lattice gauge theories, where local constraints divide 
the Hilbert space into superselection sectors and where non-trivial degeneracies can 
originate from gauge invariance, global symmetries and lattice geometry. A method 
which retains the multiplicities of energy levels and can be restricted to a chosen 
physical sector is therefore of direct interest.

 Obtaining this information classically remains difficult. Exact diagonalization 
(ED) gives the complete spectrum, but consumes memory and resources which scale exponentially 
with the system size. Krylov-space, and particularly tensor-network methods can 
substantially extend the range of accessible systems when only the ground state 
or a few extremal eigenstates are required, but they are not designed to reconstruct 
a dense many-body spectrum. Polynomial-expansion methods, most notably the kernel 
polynomial method, replace ED by the computation of moments and can efficiently 
produce smooth spectral densities for large sparse Hamiltonians \cite{Weisse2006}. 
Stochastic trace estimators further reduce the cost by averaging over random vectors. 
Their accuracy, however, is statistical and depends on the number and quality of 
the sampled states. In lattice field theory, density of states methods such as the 
LLR approach provide a different route to thermodynamics and can overcome severe 
overlap problems by resolving the density over an extensive range of energies or 
actions \cite{Langfeld2016} (see also Sec.~\ref{sec:dos_results}). These methods 
have achieved important successes, but the underlying classical representation of 
the many-body state space remains the limiting factor for Hamiltonian systems with 
a rapidly growing Hilbert space.

Quantum algorithms have approached spectral information from several different directions. 
Quantum phase estimation (QPE) provides a conceptually direct route to an eigenvalue, 
provided one uses a state with sufficient overlap with the corresponding eigenvector 
\cite{Kitaev1995}. Variational and quantum-subspace methods instead construct a small 
effective eigenvalue problem and are naturally adapted to a selected low-energy 
manifold. A second family of algorithms extracts frequencies from real-time data. 
In this approach, a quantum processor measures correlation functions or expectation 
values of the time-evolution operator, while the spectral reconstruction is performed 
classically using Fourier analysis, harmonic inversion or related signal-processing 
techniques \cite{Somma2002,Somma2019}. Methods using probe qubits coupled to a 
Hamiltonian, or performing quench protocols, are closer in spirit to laboratory spectroscopy: 
transitions are identified through the response of the system to a controlled perturbation
\cite{Stenger2022,VilchezEstevez2025}. Finally, quantum implementations of moment-expansion 
methods have been proposed in which Chebyshev moments of the density of states are evaluated 
on quantum hardware \cite{Summer2023}. These approaches target different spectral objects 
and have different requirements on state preparation, coherent evolution and post-processing. 
It is therefore useful to distinguish an algorithm which estimates selected eigenvalues or 
response functions from one which samples the density of states over an extensive portion 
of the Hilbert space.

 In the present work we investigate the latter problem using the maximally mixed-state (MMS) 
protocol introduced in Ref.~\cite{Ferris2023}. The central observation is simple. 
A Hadamard test with the model register prepared in the maximally mixed state (MMS), $\rho_{\rm MMS} = \mathds{1}_D/D$, 
measures the normalised trace of the real-time evolution operator,
\begin{equation} \label{eq:mmsTrace}
\begin{aligned}
    g_{\rm MMS}(t)&= \Tr\!\left(\rho e^{-iHt}\right) = \frac{1}{D}\Tr\!\left(e^{-iHt}\right) \\& = 
    \frac{1}{D}\sum_\alpha d_\alpha e^{-iE_\alpha t},
    \end{aligned}
\end{equation}
where $D$ is the dimension of the sampled Hilbert space. The Fourier transform of this 
signal is the normalised density of states. In contrast to a stochastic average over pure 
states, the maximally mixed state has exactly uniform support over any orthonormal basis 
(not to be confused with a uniform pure state). 
At the end, the signal must still be estimated with a finite number of measurements, 
and the attainable spectral resolution remains controlled by the longest evolution time. 
The output should consequently be understood as a finite-resolution spectral distribution, 
rather than as an efficient classical listing of exponentially many individual eigenvalues.

 Ref.~\cite{Ferris2023} demonstrated this method for a two-qubit Heisenberg Hamiltonian 
on superconducting hardware. Here we extend the protocol to an interacting non-Abelian 
lattice gauge theory (LGT). We consider the $(1+1)$-dimensional $SO(3)$ quantum link model
(QLM) coupled to a single-flavour of staggered fermions, whose gauge-invariant formulation 
provides a compact local encoding while retaining non-trivial matter-gauge interactions 
and a rapidly growing physical Hilbert space. To the 
best of our knowledge, this is the first application of MMS quantum spectral sampling to 
an interacting lattice gauge theory. The gauge model also exposes challenges which are 
largely absent in few-qubit demonstrations: the Hamiltonian contains many non-commuting 
Pauli terms, controlled real-time evolution rapidly becomes deep, and physically useful 
spectra may need to be resolved in definite symmetry sectors.

 Our main result is a comprehensive implementation of quantum spectral sampling adapted to 
a physically-relevant model of strong interactions. We construct controlled Trotterised 
evolution circuits for the $SO(3)$ QLM and develop a Pauli-frame optimization which 
substantially reduces the entangling-gate cost. We benchmark the spectral signal using 
exact diagonalization (ED) and quantum emulators, and demonstrate the protocol on 
superconducting hardware. We further show how 
the same framework can be used to obtain symmetry-resolved spectra 
and coarse-grained densities of states. 
These extensions are important for physical applications: sector resolution 
prevents unrelated symmetry sectors from being superposed in the level statistics, while 
a coarse-grained density of states is the natural object for thermodynamic observables.

 The structure of the paper is as follows. In Sec.~\ref{sec:spec}, we review the MMS-driven 
technique for quantum spectral estimation introduced by \cite{Ferris2023}, as well as 
comparable classical methods. We then show in Sec.~\ref{sec:circuit} how to realise this 
algorithm with an efficient quantum circuit in the case of the $SO(3)$ QLM with a single
flavour of fermions, including a Pauli-frame optimised Trotter step. In Sec.~\ref{sec:nt_results} 
we exhibit the performance of the algorithm at near-term scales, on emulators and on 
superconducting hardware; we also explain how this technique can be naturally 
extended to spectroscopy in a chosen symmetry sector in Sec.~\ref{sec:bn_results}, and 
to the extraction of smooth densities of states in Sec.~\ref{sec:dos_results}. 
Sec.~\ref{sec:conclusion} discusses the consequences of our findings, and explores some 
natural extensions of the work.

\section{Quantum Spectral Sampling}\label{sec:spec}
\subsection{Role of the initial state for spectral information}
 Before discussing particular algorithms, we begin with the aim of estimating the eigenvalues 
$E_\alpha$ of a Hamiltonian $H$. We label the eigenstates $\ket{E_\alpha,a}$, so that
\begin{equation}
    H\ket{E_\alpha,a}=E_\alpha\ket{E_\alpha,a}, \qquad a=1,\ldots,d_\alpha,
\end{equation}
where $d_\alpha$ is the degeneracy of the distinct eigenvalue $E_\alpha$. For an arbitrary 
density matrix $\rho$, the corresponding spectral measure is
\begin{align} \label{eq:SpecMeas}
    \mu_\rho(E) &= \Tr~\left[\rho\,\delta(E-H)\right] \nonumber \\
                &= \sum_{\alpha,a}\bra{E_\alpha,a}\rho\ket{E_\alpha,a} \delta(E-E_\alpha).
\end{align}
 The initial state is thus key to exposing which part of the spectrum is visible 
and with what weight. For a pure state $\rho=\ket{\psi}\bra{\psi}$, Eq.~\eqref{eq:SpecMeas} 
is the local density of states associated with $\ket{\psi}$. A ground-state correlation 
function probes a still more selective object, in which transition energies are weighted 
by matrix elements of a chosen operator. By contrast, the maximally mixed state 
$\rho_{\rm MMS}=\mathds{1}_D/D$ gives
\begin{equation} \label{eq:mmsNormDOS}
    \mu_{\rm MMS}(E)= \frac{1}{D}\varrho_H(E)= \frac{1}{D}\sum_\alpha d_\alpha\delta(E-E_\alpha),
\end{equation}
 which is the normalized density of states. Thus, algorithms which share the same 
 time-evolution component can nevertheless answer physically different questions 
 because they use different initial states and observables.

 It is also important to distinguish between resolving individual eigenvalues vs estimating 
a coarse-grained spectral distribution. A generic many-body Hamiltonian has a number of levels 
which grows exponentially with the volume, and the mean level spacing in the center of the 
spectrum correspondingly becomes exponentially small. No algorithm can explicitly output 
this exponentially long list with polynomial resources. Whole-spectrum algorithms instead 
often return a histogram, a broadened density of states, a finite set of moments, or selected 
spectral features at a prescribed resolution. The resolution scale is particularly relevant for 
cost estimates.

\subsection{Phase-estimation and quantum-subspace methods}
 Quantum phase estimation (QPE) is the canonical quantum algorithm for determining an 
eigenvalue \cite{Kitaev1995}. Let $U(\tau)=e^{-iH\tau}$, and suppose that the input state is
\begin{equation}
    \ket{\psi} = \sum_{\alpha,a}c_{\alpha a}\ket{E_\alpha,a}.
\end{equation}
The QPE circuit applies controlled powers $U^{2^j}$ to imprint the phases $E_\alpha\tau$ 
on an ancilla register. An inverse quantum Fourier transform then produces, schematically,
\begin{equation} \label{eq:qpe_schematic}
    \sum_{\alpha,a} c_{\alpha a} \ket{E_\alpha,a} \ket{\widetilde{E}_\alpha},
\end{equation}
where $\widetilde{E}_\alpha$ denotes the finite-precision energy estimate. A measurement 
returns one eigenvalue with probability $\sum_a|c_{\alpha a}|^2$. QPE is therefore extremely 
powerful when a good approximation to the desired eigenstate is already available \cite{Poulin2018}. 
It can refine an approximate eigenstate, project onto an exact eigenstate and provide an energy 
estimate with a precision controlled by the longest coherent evolution time.

The same properties make QPE less natural as a near-term whole-spectrum method. The probability 
of observing a level is inherited from the initial-state overlap, and states with small overlap 
must be recovered through repeated preparations. Increasing the energy precision typically requires 
longer controlled evolutions. Robust and iterative variants can reduce ancillary-qubit requirements 
and improve tolerance to particular errors, but they do not remove the basic requirements of 
state support and long-time coherent evolution. If QPE is applied to a maximally mixed state, 
every eigenvector is sampled with equal probability and degeneracies are represented correctly; 
however, one eigenvalue is still obtained per measurement, so reconstructing a dense spectrum 
requires extensive repeated sampling.

Variational eigensolvers and quantum-subspace methods make a different compromise. A set of 
trial states $\{\ket{\phi_i}\}$ is prepared, the Hamiltonian and overlap matrices
\begin{equation}
    H_{ij}=\bra{\phi_i}H\ket{\phi_j}, \qquad S_{ij}=\braket{\phi_i|\phi_j},
\end{equation}
are measured, and the resulting generalised eigenvalue problem is solved classically. 
Real-time Krylov methods generate the basis from states of the form $e^{-iHt_j}\ket{\phi_0}$; 
quantum filter diagonalization and variational quantum phase estimation are examples of this strategy
\cite{Parrish2019,Shen2022}. These algorithms can recover several low-lying states with circuits 
shallower than that required for full QPE. Their accuracy is controlled by the span and conditioning 
of the chosen subspace. They are consequently well-suited to a selected part of the spectrum, 
but do not by themselves provide an unbiased estimator of the full density of states.

\subsection{Time-series, moment and spectroscopy-based methods}
 A second route to spectral information is to regard quantum dynamics as a signal-processing 
problem. Starting with the characteristic function, $g_{\rho}$, of the spectral measure,
\begin{align} \label{eq:TrTimeSeries}
    g_\rho(t) &= \Tr~\left[\rho e^{-iHt}\right] 
      = \sum_{\alpha,a} \bra{E_\alpha,a}\rho\ket{E_\alpha,a} e^{-iE_\alpha t},
\end{align}
in the infinite-time limit, the spectral measure follows from
\begin{equation} \label{eq:InvFTmeas}
    \mu_\rho(E) = \frac{1}{2\pi} \int_{-\infty}^{\infty}dt\,e^{iEt}g_\rho(t).
\end{equation}
Somma \textit{et al.} observed that $g_\rho(t)$ can be measured with a single ancilla 
using a Hadamard test \cite{Somma2002}. For a standard choice of phase convention, 
measurements of the ancilla give
\begin{equation} \label{eq:hadamard_trace}
    \braket{X_a} + i \braket{Y_a} = \Tr~\left[\rho e^{-iHt}\right].
\end{equation}
 The measurements at different times are independent: no quantum coherence needs 
 to be preserved between experiments, and the Fourier transform or other reconstruction 
 is carried out classically. Relative to canonical QPE, this greatly reduces the ancilla 
 register and replaces one deep circuit by a set of independent controlled 
 time-evolution circuits.

 The simplest spectrum reconstruction is a discrete Fourier transform. Taking $t_k=k\Delta t$, 
with $|k|\leq K$, a binned estimator can be defined
\begin{equation} \label{eq:FilterEst}
    \widetilde{p}_j = \sum_{|k|\leq K} F_j(-k)\widetilde{g}_k,
\end{equation}
where $\widetilde{g}_k$ is the measured time series and the coefficients $F_j(k)$ define 
the spectral filter associated with the $j$th energy bin. Uniform truncation corresponds to 
the Dirichlet kernel,
\begin{align}     \label{eq:DirichletK}
    D_K(x) & = \frac{1}{2K+1} \sum_{k=-K}^{K}e^{ikx} \nonumber 
    \\
            &= \frac{1}{2K+1} \frac{\sin[(2K+1)x/2]}{\sin(x/2)}, 
\end{align}
with $x=(E_j-E)\Delta t$. The central peak narrows as the maximum time $T=K\Delta t$ 
is increased, but the finite-time truncation also produces side lobes. If an eigenvalue 
does not lie exactly on the discrete Fourier grid, its weight leaks into neighboring bins. 
This \textit{spectral leakage} can shift peak maxima, obscure nearby levels and distort 
degeneracies when the time window is too short. 

 We now move to the description of the MMS spectroscopy, which is a time-series method, 
in the next section. However, we provide a brief description of other quantum spectral-estimation 
strategies in Appendix~\ref{sec:otherSpecStrageies}.

\begin{figure*}[t]   
\centering
\begin{quantikz}[row sep=0.35cm, column sep=0.45cm]
\lstick{$\ket{0}$ (ancilla)} &
\qw & \qw & \qw &
\cdots &
\qw & \qw & \qw & \qw &
\gate{H}\gategroup[
  wires=3,
  steps=3,
  style={dashed, rounded corners, inner xsep=2pt, inner ysep=2pt},
  background
]{Time Evolution} &
\ctrl{2} &
\qw & 
\meter{$X,Y$}
\\
\lstick{$\ket{0}$ (garbage)} &
\gate{H}\gategroup[
  wires=2,
  steps=7,
  style={dashed, rounded corners, inner xsep=2pt, inner ysep=2pt},
  background
]{MMS Preparation} &
\ctrl[wire style={"1"}]{1} &
\gate{\mathrm{RESET}} &
\cdots &
\gate{H} &
\ctrl[wire style={"n"}]{1} &
\gate{\mathrm{RESET}} &
\qw &
\qw & \qw & \qw &
\qw
\\
\lstick{$\ket{0}^{\otimes n}$ (model)} &
\qw\qwbundle{n} &
\targ{} &
\qw &
\cdots &
\qw &
\targ{} &
\qw &
\qw &
\qw &
\gate{U(t)} &
\qw &
\trash{}
\end{quantikz}
\vspace{0.3cm}
\caption{Quantum circuit for maximally mixed-state spectral sampling using a reusable garbage qubit. 
The garbage qubit is entangled sequentially with each model qubit, measured with the outcome ignored, 
and reset before the next step. This prepares the model register in the maximally mixed state while 
reducing the qubit count from $2n+1$ to $n+2$, including the Hadamard-test ancilla.}
\label{fig:spectral_circuit}
\end{figure*}

\subsection{Maximally mixed-state quantum spectroscopy}
 We finally specialise to the protocol used in this work. Let the sampled Hilbert space have
dimension $D$. For an $n$-qubit register without additional constraints, $D=2^n$ and
\begin{equation} \label{eq:MMS_rho}
    \rho_{\rm MMS} = \frac{\mathds{1}_{2^n}}{2^n}.
\end{equation}
Substituting Eq.~\eqref{eq:MMS_rho} into Eq.~\eqref{eq:TrTimeSeries} gives
\begin{align}
    g_{\rm MMS}(t) = \frac{1}{D} \Tr~\left(e^{-iHt}\right) 
                   = \frac{1}{D} \sum_\alpha d_\alpha e^{-iE_\alpha t}.
    \label{eq:mms_trace_signal}
\end{align}
Every eigenvector therefore enters with the same weight $1/D$, and a level with 
degeneracy $d_\alpha$ appears with weight $d_\alpha/D$. The protocol estimates 
the characteristic function of the normalised density of states without requiring 
an eigenstate-preparation routine.

To see how the reduced ancilla state arises, write the controlled evolution as
\begin{equation}
    C_U = \ket{0}\bra{0}_a\otimes \mathds{1} + \ket{1}\bra{1}_a\otimes U, \qquad U=e^{-iHt}.
\end{equation}
Starting from $\rho_{\rm tot}(0)=\ket{+}\bra{+}_a\otimes\rho$, one obtains
\begin{align}
    \rho_{\rm tot}(t) &= C_U \rho_{\rm tot}(0) C_U^\dagger \nonumber\\
                      &= \frac12 \left[\ket0\bra0\otimes\rho + \ket0\bra1\otimes\rho U^\dagger \right. \nonumber\\
                      &+ \left.  \ket1\bra0\otimes U\rho + \ket1\bra1\otimes U\rho U^\dagger \right].
\end{align}
Equivalently, in the ancilla basis we can write
\begin{equation}
    \rho_{\rm tot}(t)= \frac12 \begin{pmatrix}
                                \rho & \rho U^\dagger\\
                               U\rho & U\rho U^\dagger
                               \end{pmatrix},
\end{equation}
where each entry is an operator on the model register. The action of the partial trace over the model register is
\begin{equation}
    \Tr_{\rm model} \left[ \ket i\bra j_a\otimes A \right] = \ket i\bra j_a\,\Tr(A).
\end{equation}
Therefore
\begin{align}
    \rho_a(t) &= \Tr_{\rm model}\rho_{\rm tot}(t) 
              = \frac12  \begin{pmatrix}
                           \Tr(\rho) & \Tr(\rho U^\dagger)\\
                           \Tr(U\rho) & \Tr(U\rho U^\dagger)
                         \end{pmatrix}.
\end{align}
Using $\Tr\rho=1$, $U^\dagger U=\mathds{1}$, and cyclicity of the trace, this becomes
\begin{equation}
    \rho_a(t) = \frac12 \begin{pmatrix}
                          1 & g_\rho^*(t)\\
                          g_\rho(t) & 1
                        \end{pmatrix},
    \qquad g_\rho(t)=\Tr[\rho U(t)].
\end{equation}
For the maximally mixed state,
\begin{equation}
    g_{\rm MMS}(t) = \Tr[\rho_{\rm MMS}U(t)] = \frac1D\Tr[U(t)].
\end{equation}


 To prepare the MMS using a pure state quantum circuit, one may introduce an auxiliary
garbage register and perform the purification
\begin{equation} \label{eq:mmsPurification}
    \ket{\Phi_D} = \frac{1}{\sqrt D} \sum_{x=0}^{D-1} \ket{x}_{\rm model}\ket{x}_{\rm garb},
\end{equation}
and then trace over the garbage qubits to produce the MMS,
\begin{equation}
    \Tr_{\rm garb} \left(\ket{\Phi_D}~\bra{\Phi_D}\right) = \frac{\mathds{1}_D}{D}.
\end{equation}
For $D=2^n$, Eq.~\eqref{eq:mmsPurification} is prepared as a product of $n$ Bell pairs 
between the model and garbage registers. 
The garbage qubits are not disturbed during controlled time evolution and are 
discarded at the end of the computation. If mid-circuit measurement and reset are available, the same 
physical garbage qubit can be reused: it is entangled with one model qubit, measured 
with the outcome ignored, reset, and then used for the next model qubit. Since each 
step leaves the corresponding model qubit in $\mathds{1}_2/2$, the final reduced state 
is again $\mathds{1}_{2^n}/2^n$. This reduces the number of required qubits from $2n+1$ to $n+2$, 
including the Hadamard-test ancilla.

The complete circuit is shown in Fig.~\ref{fig:spectral_circuit}. The pointer ancilla 
is prepared in $\ket{+}$, the model register is prepared in the MMS, and a controlled 
implementation of $U(t)=e^{-iHt}$ is applied. Repeated measurements of $X_a$ and $Y_a$ 
yield the real and imaginary parts of $g_{\rm MMS}(t)$. In practice, we evaluate this 
signal at a finite set of times,
\begin{equation}
    t_k=k\Delta t, \qquad k=0,\ldots,N_t-1,
\end{equation}
and reconstruct the spectrum classically. While $n$ distinct garbage qubits may be used 
for the MMS preparation step, we note that mid-circuit measurements are increasingly 
available on quantum hardware platforms. It is then straightforward to measure/reset 
the same garbage qubit and reuse it for all Bell pairs involved in the MMS; this may 
be shown to be statistically equivalent. We additionally note the nice property that 
because the measurement of the trace only relies on the readout of the ancilla, readout 
error mitigation for real hardware never goes beyond the determination of a $2\times2$ 
confusion matrix, and therefore scales $O(1)$ with system size.

The sampling parameters have direct physical interpretations. The maximum evolution time
$T=(N_t-1)\Delta t$ sets the characteristic energy resolution, $\Delta E\sim \frac{2\pi}{T}$,
while the time step fixes the Nyquist interval,
\begin{equation} \label{eq:nyquist_window}
    |E-E_{\rm shift}| < \frac{\pi}{\Delta t}.
\end{equation}
 An energy shift $H\rightarrow H-E_{\rm shift}\mathds{1}$ may be used to center the 
spectrum inside the available frequency window. Increasing $T$ separates nearby levels but 
requires deeper controlled time-evolution circuits. Reducing $\Delta t$ enlarges the 
unaliased bandwidth but increases the number of time points needed to reach the same $T$. 
These are signal-processing constraints and remain present even for an exact 
implementation of $U(t)$.

 Each estimate of $\Re g_k$ or $\Im g_k$ is obtained from a finite number of binary 
measurements. With $S$ shots per basis and per time point, the statistical uncertainty 
of an individual time-series datum is at most of order $S^{-1/2}$. After a linear Fourier 
reconstruction this produces an approximately frequency-independent background, with an 
amplitude that depends on the transform normalization, the window function and the 
distribution of a fixed measurement budget over the sampled times. The MMS removes the 
additional fluctuations caused by drawing a finite ensemble of random pure states, but 
it does not remove measurement noise.

For comparison, a classical stochastic trace estimator uses random states $\ket{r_k}$ satisfying
\begin{equation}
    \mathbb{E}\left[\ket{r_k}~\bra{r_k}\right] = \frac{\mathds{1}_D}{D}
\end{equation}
and approximates
\begin{equation} \label{eq:ClassicalStrace}
    g_{\rm MMS}(t) \simeq \frac{1}{K} \sum_{k=1}^{K}\bra{r_k}e^{-iHt}\ket{r_k}.
\end{equation}
 The estimator is unbiased, but its variance decreases as $1/K$ and its standard error 
as $1/\sqrt K$, with a prefactor determined by the off-diagonal structure of the operator 
in the sampling basis. For strongly correlated systems or for accurately resolving 
degeneracies, a substantial ensemble may be required. The quantum protocol represents 
the identity density matrix directly and therefore removes this random-vector sampling layer. 
This does not by itself establish an unconditional exponential speed-up: the total cost also 
includes MMS preparation, controlled Hamiltonian simulation, the maximum time required for 
the desired resolution, and the number of measurements needed to recover the signal. It does, 
however, provide a particularly direct quantum representation of the whole-spectrum trace 
and is therefore a natural candidate for spectral sampling in constrained many-body systems.

Finally, the same construction can be restricted to a symmetry sector. If $P_q$ projects 
onto a sector $q$ of dimension $D_q$ and $[P_q,H]=0$, then the sector-resolved signal is
\begin{equation}     \label{eq:SectorResolvedTrace}
    g_q(t) = \frac{1}{D_q} \Tr~\left[P_qe^{-iHt}\right],
\end{equation}
whose Fourier transform gives the normalised density of states in that sector alone. In 
a gauge theory, the relevant projector may encode global charges, lattice symmetries or 
other conserved quantum numbers after the local Gauss-law constraints have already been 
imposed by the encoding. We return to this possibility in Sec.~\ref{sec:bn_results}.

\section{Implementation of the $SO(3)$ Quantum Link Model}\label{sec:circuit}
\subsection{The $SO(3)$ QLM with adjoint quarks}
\label{sec:so3_qlm_spin32}
 We consider the one-dimensional $SO(3)$ QLM introduced in Ref.~\cite{Rico2018}. 
The matter fields are single-component fermionic quarks $\psi_x^a$, with colour index $a=1,2,3$, transforming 
in the adjoint representation of $SO(3)$. The gauge degrees of freedom live on links $(x,x+1)$ 
and are described by quantum link operators $O^{ab}_{x,x+1}$.  In contrast to Wilson's formulation, 
the link Hilbert space is finite-dimensional, while retaining the exact gauge symmetries of the original
theory.  This finiteness is the feature which makes the model particularly natural for quantum simulation and 
for the spectral-sampling protocol considered in this work.

In one spatial dimension the Hamiltonian contains no plaquette term, and can be written as 
\begin{align}
 H_{\rm QLM} =& -t \sum_x \sum_{a,b=1}^{3} \left( \psi_x^{a\dagger} O^{ab}_{x,x+1} \psi_{x+1}^{b}
              + {\rm H.c.} \right) \nonumber\\&+ m \sum_x (-1)^x M_x + G \sum_x \left( M_x-\frac32 \right)^2\nonumber\\
             & 
              + V \sum_x \left( M_x-\frac32 \right) \left( M_{x+1}-\frac32\right),
 \label{eq:so3_qlm_hamiltonian}
\end{align}
where 
\begin{equation} \label{eq:mes}
M_x = \sum_{a=1}^{3} \psi_x^{a\dagger}\psi_x^a
\end{equation}
counts the number of staggered fermions on site $x$.  The hopping term moves a fermion 
between neighboring sites while parallel-transporting its colour through the quantum link.  
The staggered mass $m$ explicitly breaks the one-site shift symmetry of the staggered fermion 
formulation, while the $G$ and $V$ terms are gauge-invariant on-site and nearest-neighbour density
interactions.  The conserved baryon number is
\begin{equation}
B = \sum_x \left( M_x-\frac32 \right).
\label{eq:so3_baryon_number}
\end{equation}

 The fermions follow the usual anticommutation relations: $\left\{ \psi^a_i, \psi^b_j\right\}=
\left\{ \psi^{a \dagger}_i, \psi^{b \dagger}_j\right\} = 0$, and 
$\left\{ \psi^a_i, \psi^{b \dagger}_j\right\}= \delta_{ab} \delta_{ij}$. In addition
to the gauge links, $O^{ab}_{x,x+1}$, the other dynamical degrees of freedom are the 
left and the right electric fluxes, $L^a_{x,x+1},R^a_{x,x+1}$, which satisfy appropriate 
commutation relations with the link operators,
\begin{align}
   [L^a, L^b] &= 2 i \varepsilon^{abc} L^c, [R^a, R^b] = 2 i \varepsilon^{abc} R^c, [L^a, R^b]=0\nonumber\\
[L^a, O^{bd}] &= 2 i \varepsilon^{abc} O^{cd}, [R^a, O^{bd}] = - 2 i O^{dc} \varepsilon^{cba}, \nonumber
\end{align}
sufficient to ensure that the exact gauge symmetry is realized irrespective of the local Hilbert 
space dimension of the operators $L,R,O$. For the minimal $SO(3)$ quantum link representation, 
the link operator may be represented by two spin-$\frac{1}{2}$ objects, one associated with each 
end of the link,
\begin{align}
O^{ab}_{x,x+1} &= \sigma^a_{x,+} \otimes \sigma^b_{x+1,-}, \nonumber \\
L^a_{x,x+1} &= \sigma^a_{x,+}\otimes\mathbb{I},~R^a_{x,x+1} = \mathbb{I}\otimes\sigma^a_{x+1,-}.
\label{eq:so3QLrep}
\end{align}
 The subscripts $+,-$ denote the location of the links with respect to the sites: 
a $+$ is the forward end, while a $-$ is the backward end, and is unique in one spatial 
dimension. The constrained 
nature of the problem lies in the selection of the states which are classified as 
\emph{physical}. In a gauge theory, this is typically done via the imposition of the
Gauss' Law, which generate local gauge transformation and annihilates physical states.
For this theory, the gauge transformations are generated locally by
\begin{equation}
G_x^a = \psi_x^{b\dagger} T^a_{bc} \psi_x^c + L^a_{x,x+1} + R^a_{x-1,x},
\label{eq:so3_gauss_generator}
\end{equation}
where  $T^a_{bc}=-2i\epsilon_{abc}$ are the $SO(3)$ generators in the adjoint representation.
Physical states obey Gauss' Law, 
\begin{equation}
G_x^a\ket{\Psi}=0, \qquad a=1,2,3, \qquad \forall x. 
\label{eq:so3_gauss_law}
\end{equation}
Thus, the gauge-invariant Hilbert space consists only of local $SO(3)$ singlets formed from 
the matter field at a site and the adjacent quantum-link degrees of freedom.

 A common problem with quantum simulation of gauge theories is the overuse of quantum resources
to represent the different degrees of freedom, although only a tiny subset is physical. 
In some cases it is possible to impose the Gauss' Laws analytically to devise a
gauge-invariant encoding, targeting only the physical states, reducing the circuit depth, 
and consequently the errors from quantum hardware. 

The key attractive feature of this model is the very natural gauge-invariant encoding 
available, first developed in \cite{Rico2018}, and extended to $(2+1)d$ in \cite{VanGoffrier2024}. We briefly review the construction here,
since we use the gauge-invariant formulation to estimate the spectral density. The most
straightforward gauge-invariant operator is the local fermion occupation number operator
in Eq.~\ref{eq:mes}.
Further, at each site, the matter triplet and the two neighboring link-end degrees of 
freedom must combine to an $SO(3)$ singlet. This gives rise to the further gauge-invariant
operators at each site,
\begin{equation} \label{eq:bar}
    B_{x,+} = \sum_{a=1}^3 \psi^{a \dagger}_x \sigma^a_{x,+},
  ~~B_{x,-} = \sum_{a=1}^3 \psi^a_x \sigma^a_{x,-} .
\end{equation}
The operators $B_{x,\pm}^\dagger$ and $B_{x,\pm}$ can be understood as \emph{baryon}
creation and annihiliation operators. Since it is possible to have fermionic states 
with $M_x=0,1,2,3$, there are locally four gauge-invariant states per site and these
creation and annihilation operators raise the baryon number of the sites. The matrix
representation of these operators are straightforward,
\begin{equation}
M_x = \begin{pmatrix}
       0 & 0 & 0 & 0\\
       0 & 1 & 0 & 0\\
       0 & 0 & 2 & 0\\
       0 & 0 & 0 & 3
      \end{pmatrix},
      ~
B_{x,+} = \begin{pmatrix}
           0 & \sqrt{3} & 0 & 0\\
           0 & 0 & 2 & 0\\
           0 & 0 & 0 & \sqrt{3}\\
           0 & 0 & 0 & 0
          \end{pmatrix},
\label{eq:GIops}
\end{equation}
with $B_{x,-}= -P_x B_{x,+}$, where $P_x = (-1)^{M_x} = \textrm{diag}(1,-1,1,-1)$ acts as a
parity operator sensitive to the number of fermions in a state. 

Once the gauge-invariant basis is constructed, the local four-dimensional local Hilbert space 
can be identified with that of a single spin $S=3/2$.  Choosing the ordered basis 
$M_x=0,1,2,3 \longleftrightarrow S_x^3=-\frac32,-\frac12,\frac12,\frac32$, we can identify
$M_x = S_x^3+\frac32$. The matrix elements in Eq.~\eqref{eq:GIops} are precisely the Clebsch-Gordan 
factors of the spin-$3/2$ ladder operators, up to phase factors. Thus, the $B_{x,-}$ and 
$B^\dagger_{x,-}$ operators can be exactly mapped to the spin-$3/2$ raising and lowering 
operators, with additional sign factors. 
Substituting the encoded operators into Eq.~\eqref{eq:so3_qlm_hamiltonian} gives
\begin{align}
H_{\rm sp} &= -t \sum_x \left( B^\dagger_{x,+} B_{x+1,-} + B_{x,+} B^\dagger_{x+1,-}\right) \nonumber\\
           &+ m \sum_x (-1)^x M_x + G \sum_x \left(M_x - \frac{3}{2}\right)^2 \nonumber\\ 
           &+ V \sum_x \left(M_x - \frac{3}{2}\right) \left(M_{x+1} - \frac{3}{2}\right).
\label{eq:spin32_hamiltonian}
\end{align}
%
Thus, after imposing Gauss' Law exactly, the non-Abelian $SO(3)$ lattice gauge theory 
in one spatial dimension becomes a local spin-$3/2$ chain of dimension $L$. 
The $L$ sites correspond spin $1/2$'s numbering $Q=2L$.  The matter-gauge hopping 
maps to an XY exchange, the staggered mass maps to a staggered longitudinal field, 
the on-site density interaction maps to a single-ion anisotropy, and the nearest-neighbour 
density interaction maps to an Ising coupling.  In this encoded description gauge invariance
is not enforced by an energy penalty: gauge-variant states have been removed from the 
Hilbert space from the outset.

 The matter content of the theory has important consequences for the phenomenology. Two kinds of 
\emph{fermionic baryons} are possible in the theory: (a) the usual one formed from \emph{three} 
quarks, as happens in QCD, and (b) a bound state formed from a single (adjoint) quark coupling to
a gluon, which is distinct from QCD. These baryons can then be further bounded in \emph{nuclei}
leading to a rich phase structure in the \emph{toy nuclear physics} in the model, as was 
demonstrated in \cite{Rico2018}. 
In cases where the fermions are chosen in the fundamental representation, the colour-singlet 
excitations give rise to \emph{bosonic baryons}, leading to physics different from QCD. 

\begin{figure}[h!]
    \centering
        \includegraphics[width=0.48\textwidth]{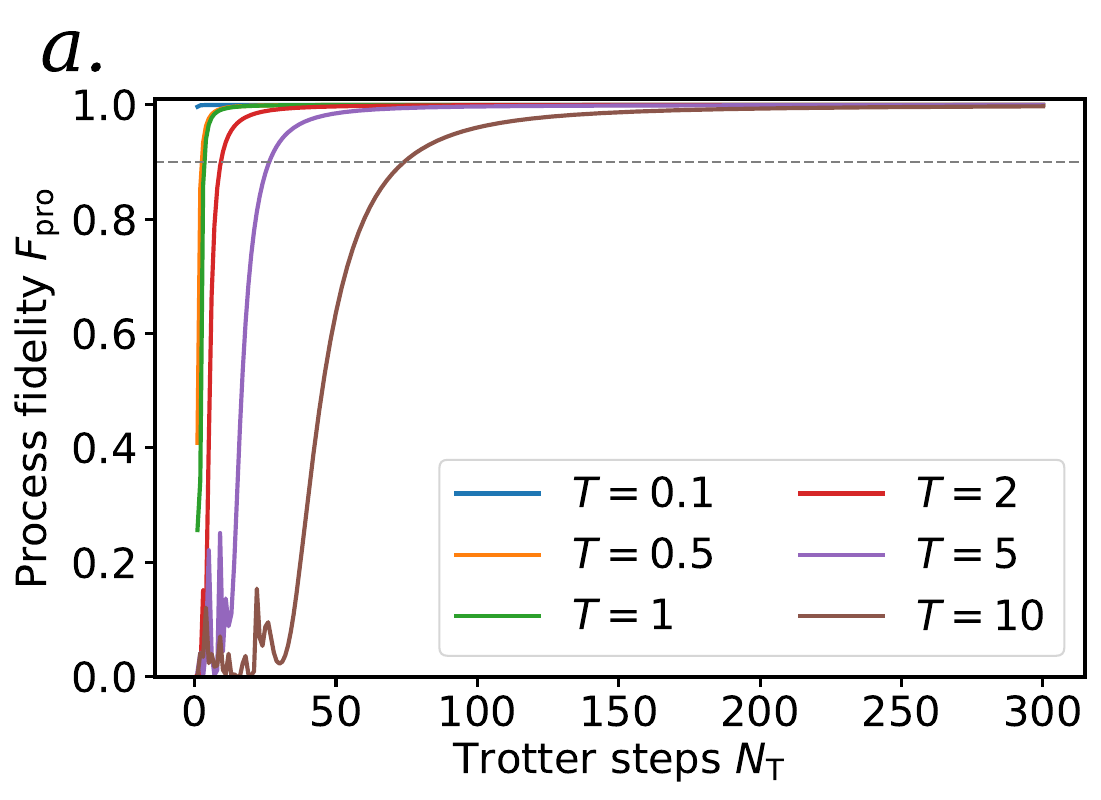}
        \includegraphics[width=0.48\textwidth]{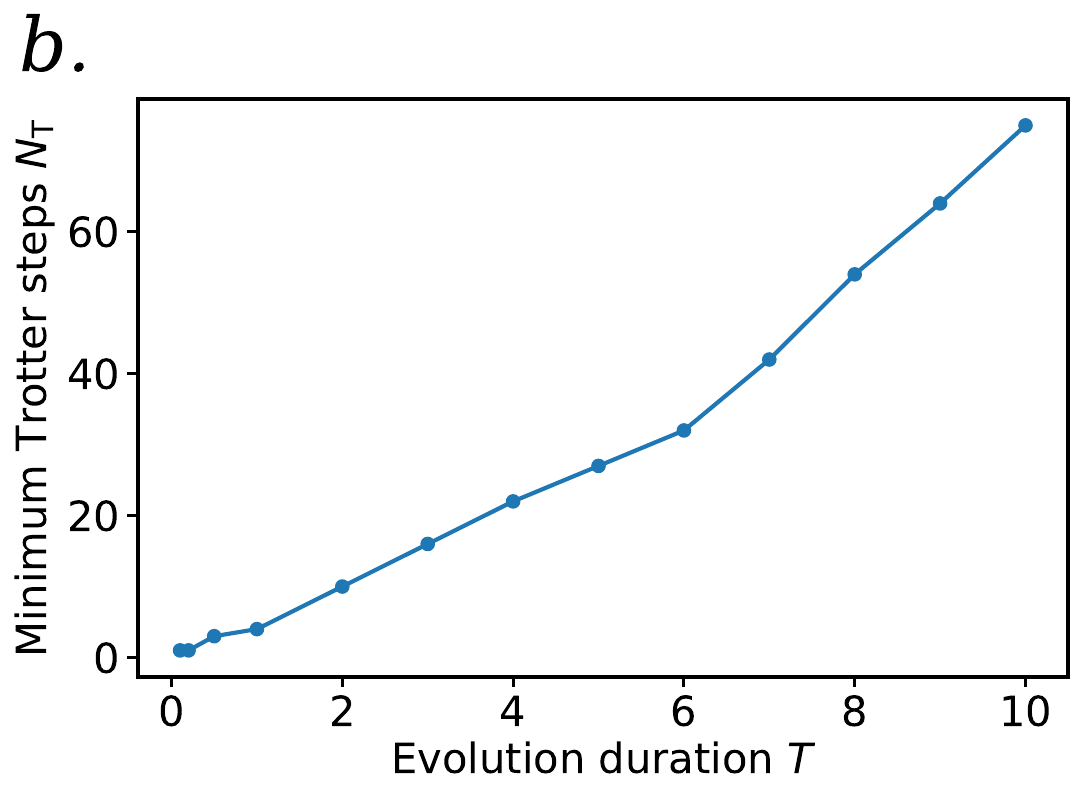}
    \caption{
        Convergence of the first-order Trotter decomposition.
        (\textit{a}) Process fidelity as a function of the number of
        Trotter steps, for several evolution durations $T$.
        (\textit{b}) Minimum number of Trotter steps required to attain
        a process fidelity greater than $0.9$, as a function of $T$.
    }
    \label{fig:trotter1_convergence}
\end{figure}

\subsection{Trotterization of the Hamiltonian}
Trotterization, the decomposition of complicated Hamiltonian time-evolution operators 
in terms of simpler multi-qubit rotations, is based on the observation \cite{trotter1959product} 
that for $H = \sum_i H_i$:
\begin{equation}
    \lim_{m \rightarrow \infty} \left( \prod_i e^{-i H_i t /m} \right)^m = e^{-i H t}.
\end{equation}
Generically, the number of local interaction terms in a lattice Hamiltonian grows with the lattice 
size and with the amount of internal gauge-field information retained by the truncation. For the 
$L=2$ system considered below, corresponding to two spin-$3/2$ lattice sites, the Hamiltonian 
contains $22$ distinct Pauli terms.

In practice, the evolution over a duration $T$ is approximated using a finite number $N_T$ 
of Trotter steps. The accuracy depends both on $N_T$ and on the order $p$ of the product formula:
\begin{equation}
    e^{-iH \cdot T} \simeq \left[ S_p\left(\frac{T}{N_T}\right) \right]^{N_T}.
\end{equation}
Here $S_p(\tau)$ denotes a product formula whose error begins at order $\tau^{p+1}$. For the first-order formula,
\begin{equation}
    S_1(\tau)=\prod_i e^{-iH_i\tau},
\end{equation}
and the leading error of a single step is determined by the commutators between the Hamiltonian terms:
\begin{equation}
    e^{-iH\tau} - \prod_i e^{-iH_i\tau} = -\frac{\tau^2}{2}\sum_{j<k}[H_k,H_j] + \mathcal{O}(\tau^3).
\end{equation}
Consequently, at fixed total duration $T$, the leading global error of the first-order formula 
decreases as $\mathcal{O}(T^2/N_T)$.

To quantify the ideal product-formula error below, we use the process fidelity
\begin{equation}
    F_{\mathrm{pro}}= \frac{ \left| \operatorname{Tr} \left(U^\dagger U_{\mathrm{Trot}}\right) \right|^2}{D^2},
    \qquad D=2^Q,
\end{equation}
where $U=e^{-iH \cdot T}$ and $U_{\mathrm{Trot}}$ is its Trotterised approximation. 
This quantity equals unity when the two unitaries agree up to a global phase. Figure 
\ref{fig:trotter1_convergence} gives a basic illustration of the process fidelity as 
a function of $N_T$ for a variety of time-evolution durations $T$ in Figure
\ref{fig:trotter1_convergence}a, as well as showing a roughly linear increase of 
necessary $N_T$ to obtain a desired process fidelity as a function of $T$ in Figure 
\ref{fig:trotter1_convergence}b. On near-term quantum hardware, however, increasing 
$N_T$ also increases the circuit depth and therefore introduces more decoherence error, 
so a reduction in ideal product-formula error need not produce a more accurate measured result.
\begin{figure}
    \centering
    \includegraphics[width=0.5\textwidth]{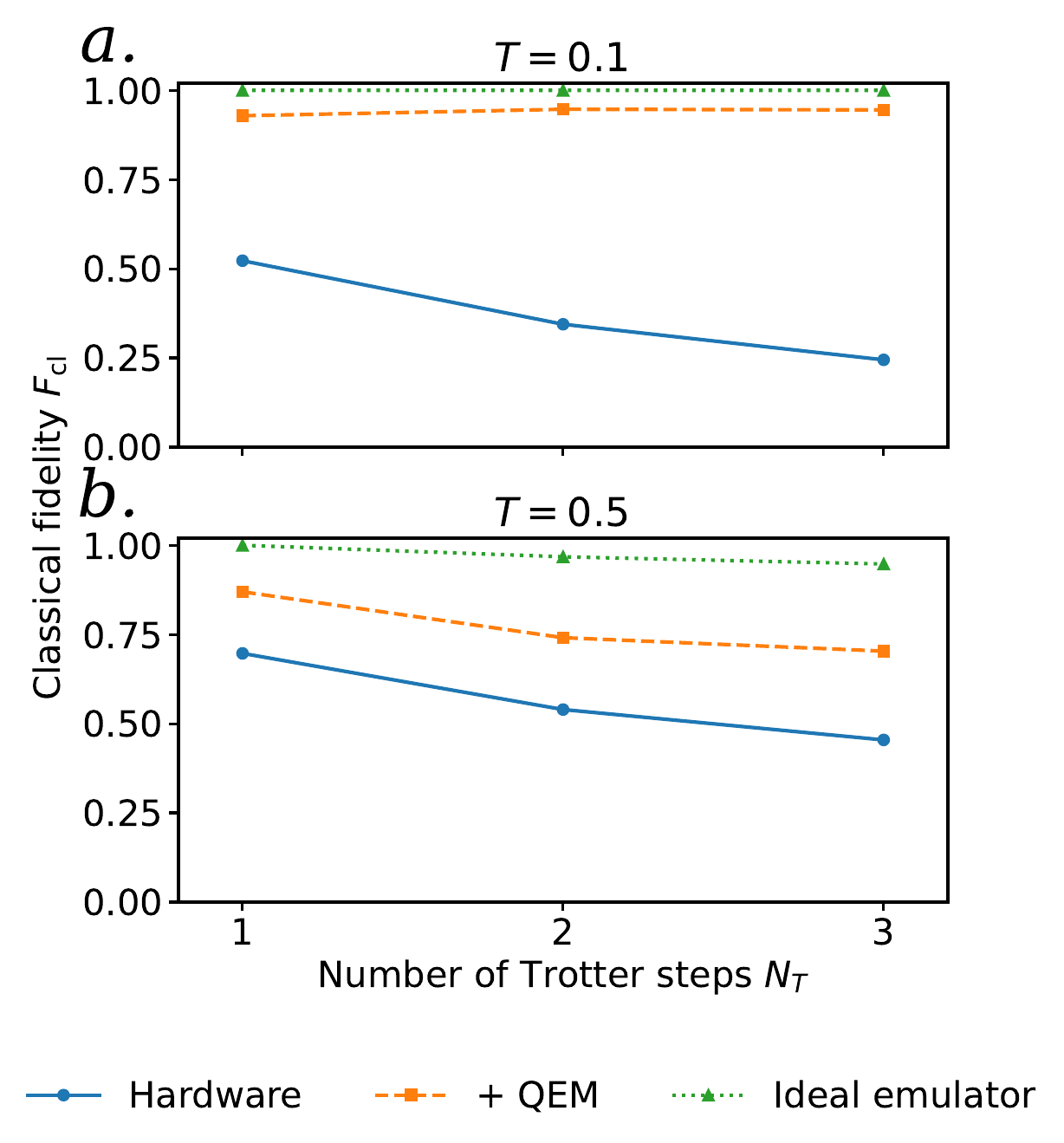}
    \caption{Classical fidelity of the measured output distribution after $N_T=1,2,3$ circuitised 
    Trotter steps on IBM quantum hardware, relative to the exact-theory distribution. Results are 
    shown for (\textit{a}) $T=0.1$ and (\textit{b}) $T=0.5$, comparing the raw hardware data, the 
    hardware data after $T=0$ background subtraction (Hardware + QEM), and an ideal shot-based emulator.}
    \label{fig:ibm_trottersteps}
\end{figure}
We tested these circuitised first-order product formulas on IBM superconducting quantum hardware 
for $T=0.1,0.5$ and $N_T=1,2,3$. Figure~\ref{fig:ibm_trottersteps} shows the classical fidelity 
between the measured and exact output distributions. Although increasing $N_T$ reduces the ideal 
product-formula error, the accompanying increase in circuit depth causes the raw hardware fidelity 
to deteriorate, particularly for $T=0.5$.

We may partially mitigate this deterioration using operator decoherence renormalization (ODR) 
\cite{Farrell2024}. For every $T>0$ circuit, we execute a corresponding calibration circuit with 
the same circuit structure but with $T=0$, whose ideal action is the identity. The deviation of 
this calibration result from its known ideal value is then used to estimate and compensate for 
the noise affecting the corresponding physical circuit. As Figure~\ref{fig:ibm_trottersteps} 
demonstrates, this correction substantially improves the measured distribution fidelity, 
although it does not completely remove the effect of increasing circuit depth.

However, by a combination of further optimizations of our Trotter arrangement and circuitization, 
it is possible to improve matters somewhat. First of all, we can consider the commutators between 
the Hamiltonian interaction terms on a single locality (for us, four neighboring sites). The 
nonzero commutators are sparse enough that is is possible to block these $22$ terms into four blocks, 
where all terms within each block mutually commute; the blocks are diagrammatised in 
Figure~\ref{fig:so3_commuting_graph}. Because Trotter error is driven by the commutators 
between neighboring terms in the expansion, substantial error is avoided by requiring the terms 
in each block to be adjacent, in any order. As we will see below, the permutation of the blocks can 
still substantively impact error, but these permutations are few enough that we can exhaustively 
search them.
\begin{figure}[t!]
    \centering
    \includegraphics[width=0.5\textwidth]{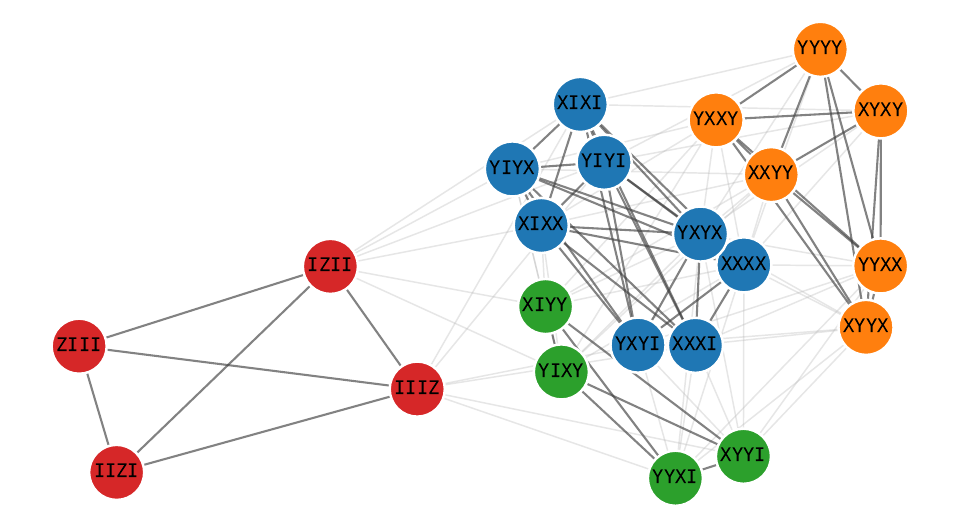}
    \caption{Terms of the $4$-qubit Hamiltonian for the $(1+1)d$ $SO(3)$ QLM, arranged minimally 
    (though not uniquely) into mutually-commuting blocks.}
    \label{fig:so3_commuting_graph}
\end{figure}
Second, we may employ the symmetric higher-order Suzuki product formulas \cite{SUZUKI1990319}. 
These recursively compose lower-order formulas to cancel successive orders in the product-formula 
error; we consider orders $p=2,4,6$, and $8$, in addition to the first-order formula above.

As a clear demonstration of the effectiveness of higher-order Trotter steps in realising longer 
time-evolutions, we show in Figure~\ref{fig:hot_1step_processfidelity} the process fidelity vs. $T$ 
for one Trotter step of order $1,2,4,6,8$. In each case, we consider all $4!$ permutations of the 
aforementioned Hamiltonian blocks; the performance differences between these permutations are notable, 
especially for higher orders, but they are always less significant than the order itself. 

Figure~\ref{fig:hot_1step_processfidelity} first compares the process fidelity obtained from a single 
Trotter step of duration $T$ for product-formula orders $p=1,2,4,6$, and $8$. All results in this 
comparison use the $L=2$ Hamiltonian, corresponding to $L=2$ spin-$3/2$ lattice sites. For each order, 
we test all $4!=24$ permutations of the four commuting Hamiltonian blocks. Increasing the product-formula 
order produces a substantially larger improvement than can be obtained by optimising the block ordering, 
although the latter remains relevant within each order.
\begin{figure}[t!]
    \centering
    \includegraphics[width=0.5\textwidth]{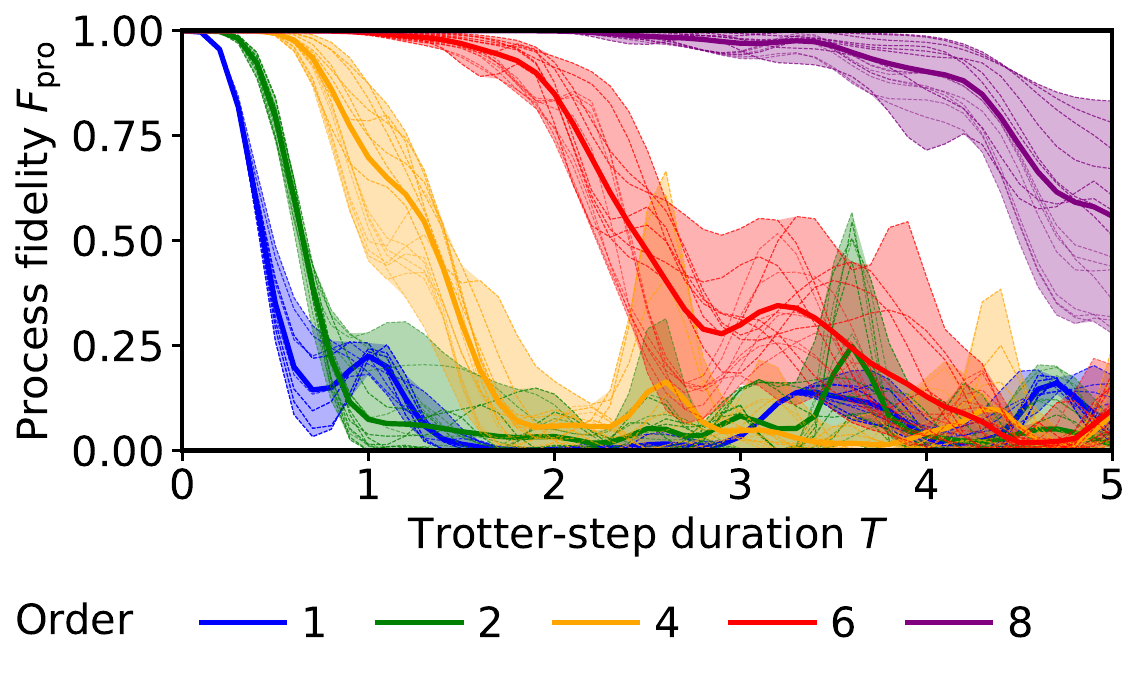}
    \caption{Process fidelity of a single Trotter step as a function of its duration $T$, for 
    product-formula orders $p=1$, $2$, $4$, $6$, and $8$. For each order, the curves show all $4!=24$ 
    permutations of the four commuting Hamiltonian blocks. The separation between orders is greater 
    than the variation induced by the block ordering.}
    \label{fig:hot_1step_processfidelity}
\end{figure}
The reduction in the required number of Trotter steps must, however, be balanced against the increasing 
cost of each higher-order step. After adjacent exponentials of the same commuting block are combined, 
orders $p=1$, $2$, $4$, $6$, and $8$ require respectively
\begin{equation}
    c_p=4,\ 7,\ 33,\ 163,\ 813
\end{equation}
block exponentials per step. We therefore quantify the total cost as

\begin{equation}
    C_{\exp}=c_pN_T,
\end{equation}
where $N_T$ is the minimum number of steps required to reach the target fidelity. 
Figure~\ref{fig:hot_steps_cost_fid0999} shows both $N_T$ and $C_{\exp}$ for the demanding threshold 
$F_{\mathrm{pro}}\geq0.999$. Higher orders require far fewer steps, but their rapidly increasing 
per-step costs produce a sequence of crossovers in the total cost. Low-order formulas are favoured 
for short evolutions, while order $4$ provides the most favorable balance over much of the longer-duration 
range studied. Order $6$ becomes competitive only at still longer durations, and no order $8$ crossover 
is established within the range shown.
\begin{figure}[t]
    \centering
    \includegraphics[width=0.5\textwidth]{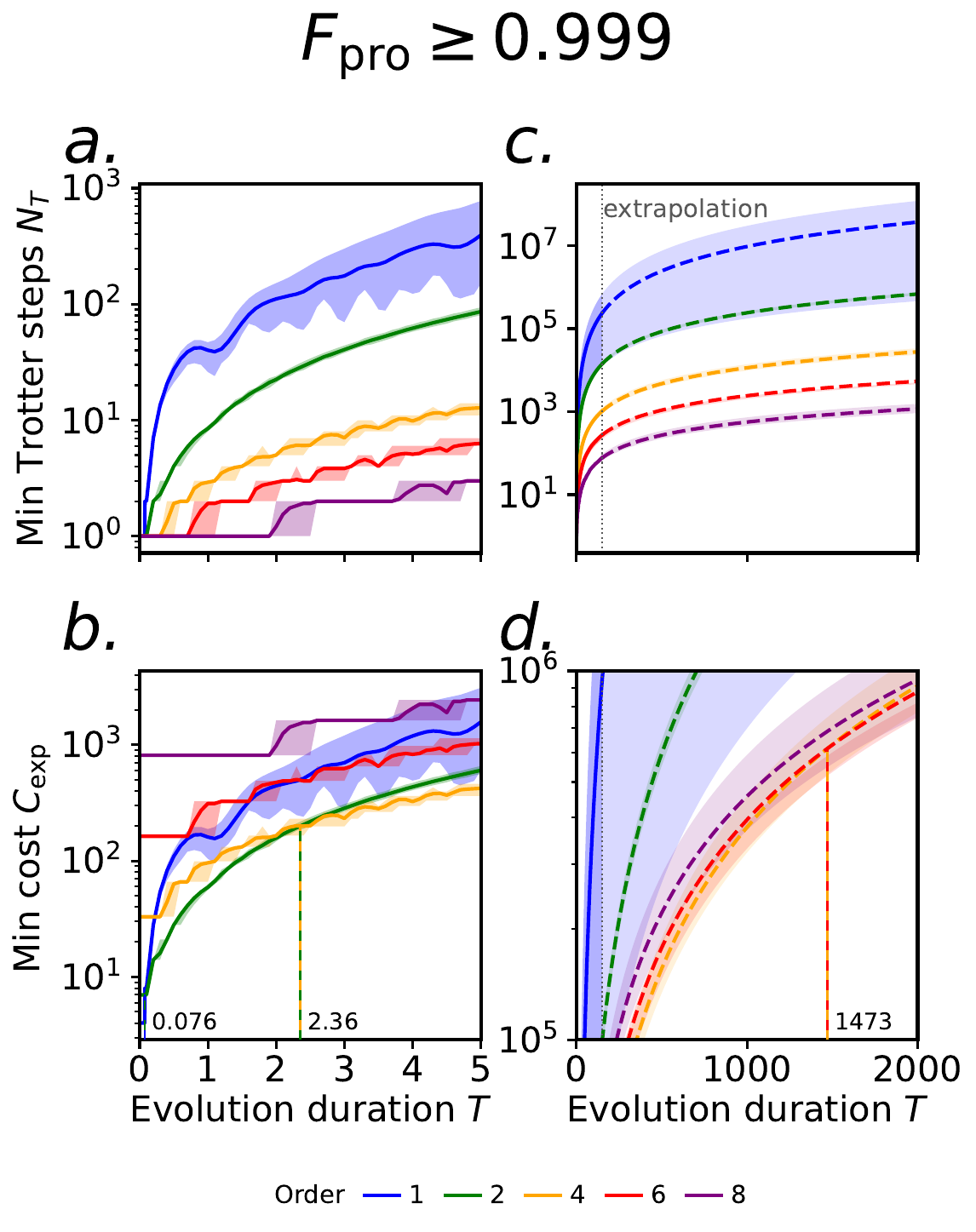}
    \caption{Minimum number of Trotter steps $N_T$ (\textit{a}) and corresponding block-exponential 
    cost $C_{\exp}$ (\textit{b}) required to attain $F_{\mathrm{pro}}\geq0.999$. Curves and shaded 
    bands show the mean and full range over all $24$ commuting-block permutations. Dashed curves in (\textit{c}) and (\textit{d})
    extrapolate the respective power-law fits up to $T=2000$; markers indicate 
    mean-cost crossovers between successive competitive orders.
    }
    \label{fig:hot_steps_cost_fid0999}
\end{figure}
Although detailed spectral-sampling results are reserved for the following sections, 
Figure~\ref{fig:hot_spectra_costmatched} provides a direct test of this cost comparison. We perform 
noiseless state-vector simulations of the maximally mixed state algorithm for $L=2$, using the standard 
higher-order Suzuki product formulas \cite{SUZUKI1990319} and assigning each order approximately 
$1600$ block exponentials. Orders $1$, $2$, and $4$ resolve the principal spectral peaks, whereas the 
high cost of each order $6$ and order $8$ step permits too few steps at this fixed budget to control 
the accumulated product-formula error. For this Hamiltonian, sampling grid, and cost budget, order $2$ 
produces the clearest spectrum, followed closely by order $4$.

We have devised a novel Pauli-Frame-based circuitization approach which is especially well-suited to time evolution with a large quantity of multi-qubit Hamiltonian terms. The background to this approach is given in Appendix~\ref{sec:pauliframe}, and the optimization procedure by which we identified our low-depth Trotter circuit is detailed in Appendix~\ref{sec:pauliframecompilation}. The circuit itself is displayed in Figure~\ref{fig:trotter_circuit}.
\begin{figure}[t!]
    \centering
    \includegraphics[width=0.49\textwidth]{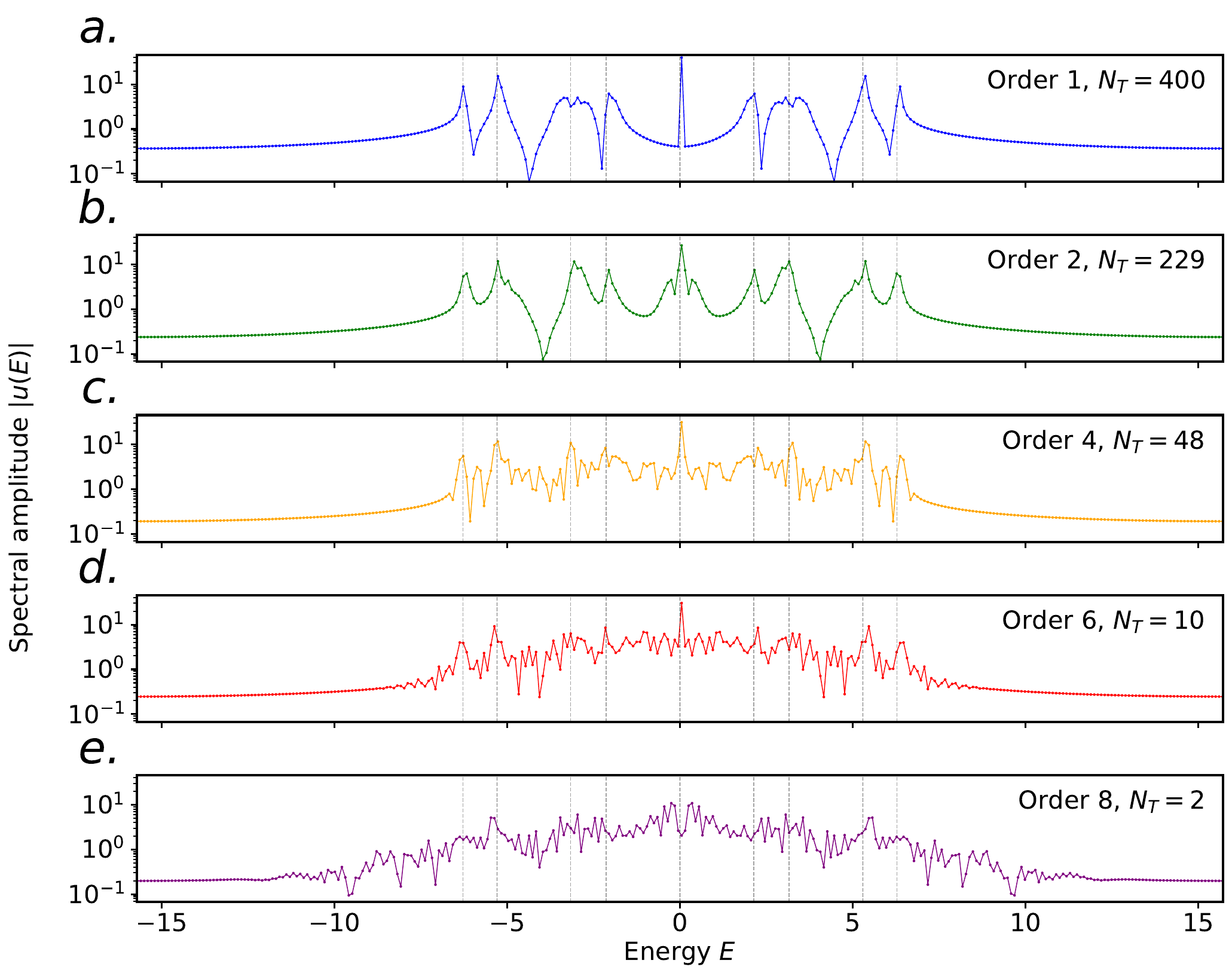}
    \caption{Approximately cost-matched spectral-sampling results for product-formula orders 
    $p=1$, $2$, $4$, $6$, and $8$, using approximately $1600$ block exponentials in each case. 
    From top to bottom, the corresponding numbers of Trotter steps are $N_T=400$, $229$, $48$, 
    $10$, and $2$. All results use the same $L=2$ Hamiltonian and sampling grid. Dashed vertical 
    lines mark the exact eigenvalues of the traceless Hamiltonian.}
    \label{fig:hot_spectra_costmatched}
\end{figure}
%

\section{The Spectral Sampling Algorithm}\label{sec:nt_results}
\subsection{Spectral Sampling Simulator Results}
 We first benchmark the MMS spectral-sampling algorithm against two classical stochastic estimators 
 of the same normalised trace. The controlled time evolution is implemented exactly, without 
 Trotterization, finite-shot sampling, or noise. This isolates the effect of stochastic trace 
 estimation from the other approximations considered elsewhere in this work.

For comparison, we estimate the normalised trace classically as
\begin{equation}
    g_R(t) = \frac{1}{R} \sum_{r=1}^{R}\bra{\phi_r}e^{-iHt}\ket{\phi_r},
\end{equation}
using $R=1000$ random-phase states. Given an orthonormal sampling basis $\{\ket{b_j}\}_{j=1}^{D}$, 
where $D=2^Q$, these states take the form 
\begin{equation}
    \ket{\phi_r} = \frac{1}{\sqrt{D}} \sum_{j=1}^{D} e^{i\theta_{rj}}\ket{b_j},
\end{equation}
where the phases $\theta_{rj}$ are sampled independently and uniformly from $[0,2\pi)$. Figure~\ref{fig:classical_quantum_q4} constructs these states using the eigenbasis of two auxiliary Hamiltonians as the sampling basis: (\textit{a}) the $SO(3)$ QLM in the massless limit, and (\textit{b}) in the quenched ($m \rightarrow \infty$) limit, which is exactly solvable and may be chosen as simply the $Z$-computational basis.
The same exact quantum MMS spectrum is shown in blue in both panels, and all calculations use 
the same time samples and physical energy grid. 

To compare peak amplitudes directly, all spectra are normalised using the quantum peak at $E=0$, accounting for degeneracies. The dotted horizontal lines in Figure~\ref{fig:classical_quantum_q4} consequently 
mark the expected amplitudes $1$, $2$, and $4$. The exact spectrum contains a fourfold-degenerate level 
at $E=0$, twofold-degenerate levels at $E=\pm2$ and $E\simeq\pm6.083$, and nondegenerate levels at 
$E\simeq\pm3.531$ and $E\simeq\pm11.556$.

We identify peaks using an unsupervised procedure applied independently 
to each Fourier spectrum. We search for local maxima in the logarithm of the spectral amplitude, 
heuristically requiring a log-excess of $0.35$ and a minimum separation of two frequency bins. The exact 
eigenvalues are not supplied to the detector, so the dashed exact-diagonalization guides provide 
an independent comparison with the inferred peak locations. Small offsets between the markers 
and these guides are expected from finite Fourier resolution and spectral leakage.

However, the peak amplitudes and therefore degeneracies are more sensitive to the sampling basis. Random-phase sampling in the computational basis reproduces the approximate $1{:}2{:}4$ degeneracy hierarchy but retains visible finite-$R$ fluctuations, including asymmetries between positive- and negative-energy partners. Sampling in the eigenbasis of the massless Hamiltonian agrees substantially more closely with the quantum spectrum and its degeneracies. For this Hamiltonian, at $m=1$, constructing the random-phase states in that basis reduces the finite-$R$ fluctuations relative to the computational-basis construction.

This comparison separates two aspects of the reconstruction. Eigenvalue locations are relatively robust once sufficiently many classical samples are included, whereas relative peak amplitudes, and hence the inference of degeneracies, depend more strongly on the basis used for random-phase sampling. The purified quantum calculation evaluates the MMS trace directly and includes every spectral sector simultaneously. Both classical estimators converge to the same basis-independent trace as $R$ is increased, but their finite-$R$ variances can differ substantially because the off-diagonal matrix elements sampled by the random phases depend on the chosen basis.
\begin{figure}[t!]
    \centering
    \includegraphics[width=0.48\textwidth]
        {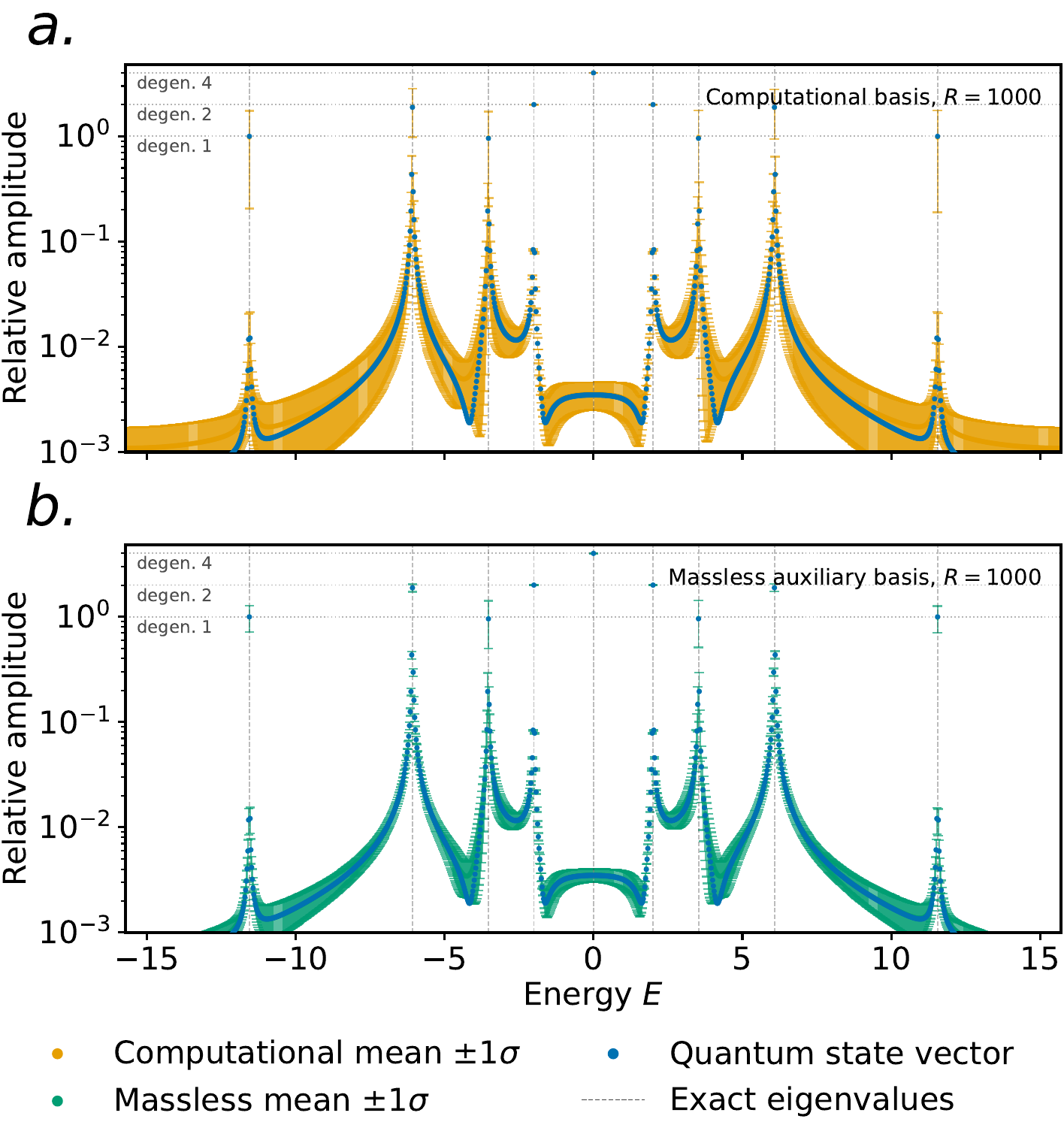}
    \caption{Comparison of random-phase classical trace estimation with exact quantum MMS spectral sampling for the periodic $L=2$ $SO(3)$ QLM, with physics parameters $m=1$ and $t=1$. The classical estimates include means and $1\sigma$-intervals for $R=1000$ random-phase states constructed in the computational basis in (\textit{a}) and in the eigenbasis of the massless $SO(3)$ QLM Hamiltonian in (\textit{b}); dashed vertical lines show the exact eigenvalues and dotted horizontal lines indicate degeneracies $1$, $2$, and $4$.}
    \label{fig:classical_quantum_q4}
\end{figure}

\begin{figure}
    \centering
    \includegraphics[width=0.46\textwidth]{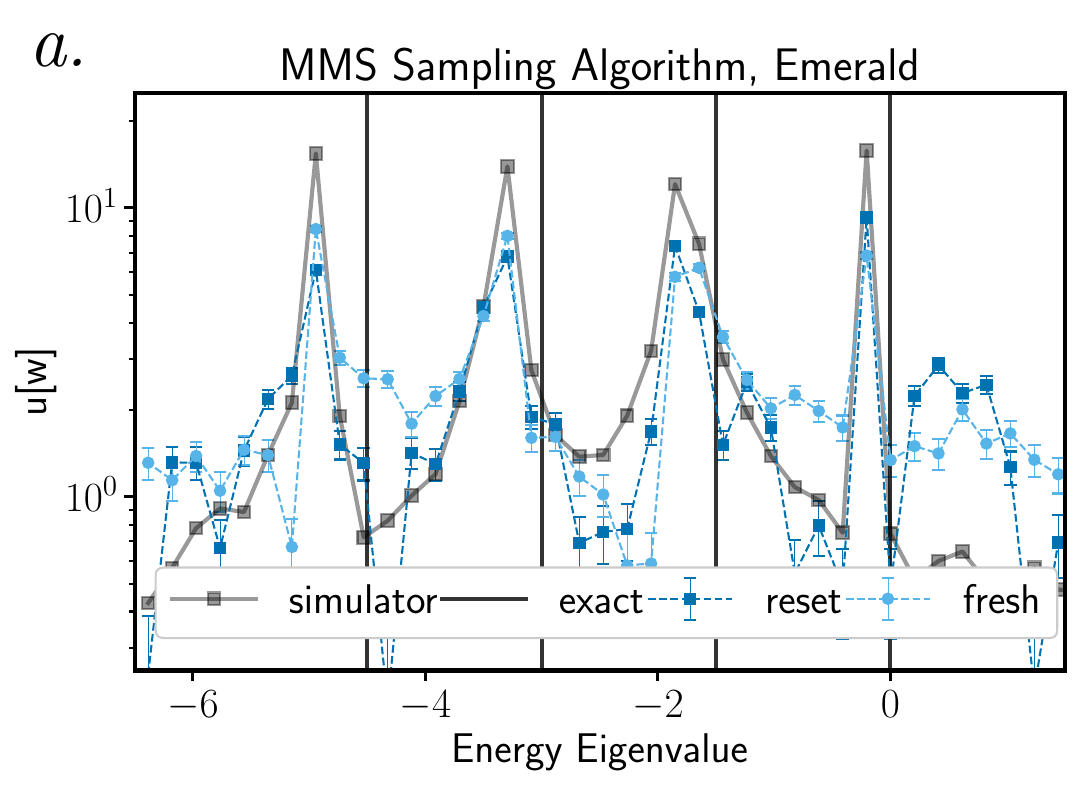}
    \includegraphics[width=0.46\textwidth]{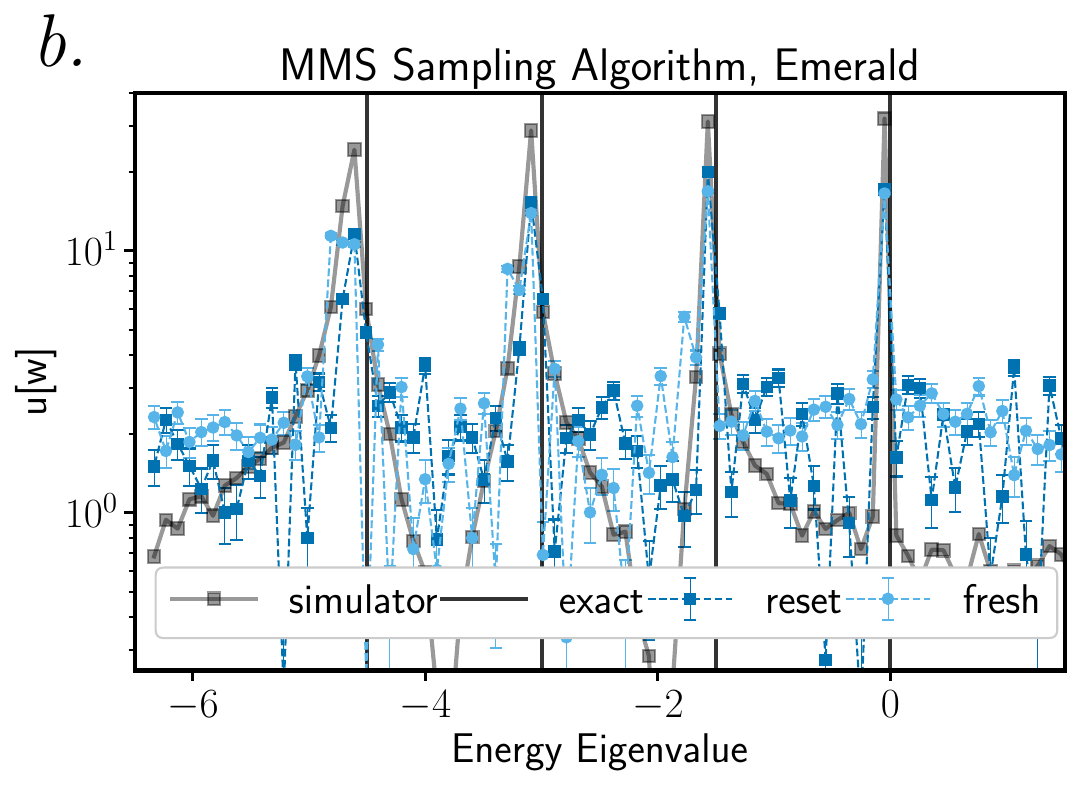}
    \includegraphics[width=0.46\textwidth]{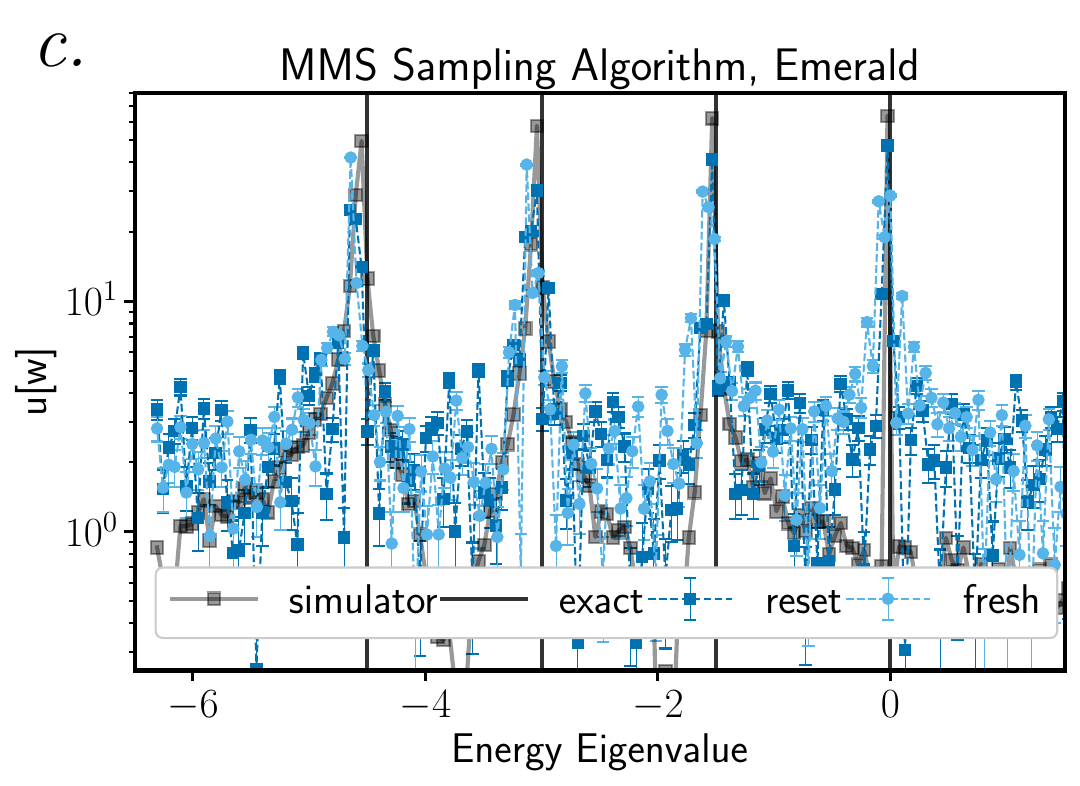}
    \caption{Results of the maximally mixed state algorithm on IQM Emerald. The parameters used are  $\omega_{\mathrm{max}}=2.0$ and (\textit{a}) $d\omega=0.2$, (\textit{b}) $d\omega=0.1$, and 
   (\textit{c}) $d\omega=0.05$. }
    \label{fig:mms_real_examples}
\end{figure}

\subsection{Spectral Sampling Real-Hardware Results}
Here we test the maximally mixed state algorithm for the $L=1, Q=2$ case on IQM's Emerald device, which allows for dynamic circuits--that is, those with mid-circuit measurement. Mid-circuit measurements allow us to reset and reuse the garbage qubit shown in Figure 1, entangling it with each one of the model qubits to form a maximally mixed state. This reduces the number of qubits necessary for a simulation to $Q+2$ from the naive number of $2Q + 1$, effectively doubling the system size that any given quantum computer can study.

Figure~\ref{fig:mms_real_examples} shows the spectrum results after the Fourier transform, which include those for the simulator along with those involving a single garbage qubit that is reset, and those involving the naive number of qubits necessary if we do not have dynamic circuits at our disposal (\textit{fresh}). Finally, they also give peaks at the exact energies for this system. Here we see good agreement between the simulator results and quantum hardware results, regardless of whether fresh or reset qubits are used. The three figures from top to bottom show the effects of increased resolution: from $d\omega=0.2$ (\textit{a}) to $d\omega=0.1$ (\textit{b}) to $d\omega = 0.05$ (\textit{c}). We observe that with better resolution the algorithm shows improved agreement with the exact energies. We also note that $L=1$ requires no Trotterization for time-evolution, as no off-diagonal hopping terms are present, allowing us to calculate evolution at long times without the need for deep circuits. We also note that the systematic left-shifting of the peaks is simply a matter of the algorithm at low resolution rather than any issue of the real hardware, as the peaks for the real hardware line up very will with those from the perfect simulator at each resolution, and the peaks shift closer to the correct energies at each increase in resolution.

\section{Physical Applications of The Algorithm}


\subsection{Disorder and the transition to MBL}
\label{sec:bn_results}
One application of this algorithm involves the ability to extract energy gaps 
from the middle of the spectrum in polynomial-time, rather than the previous 
focus of the low energy spectrum. This allows for studying the transition to many-body localization for a highly disordered model. Here we take advantage of limiting the maximally-mixed state to that of a symmetry-resolved sector of interest: a particular baryon number.

We explore this capability here for the $SO(3)$ model in a couple of baryon number sectors. The baryon number is given by 
\begin{equation}
B = \sum_{j=0}^{L-1} \left(n_j - \frac{3}{2}\right).
\end{equation}
Figure~\ref{fig:disorder} shows data for two system sizes, $L=6$ and $L=4$, diagonalized in the $B=0$ sector, with $t=1$, $m_0=2$, $G=0.3$, and $V=0.13$. The $\alpha$ value is the strength of the disordered coupling, which we choose to be $m$ for these simulations, and the observable is the mean gap ratio $\braket{r}$, averaged over the energy gaps of the middle $50\%$ of energies in the spectrum for $50$  random realizations of the disordered Hamiltonian for $L=6$ and $50$ random realizations for $L=4$.  This mean ratio gives information about which distribution the energy levels follow: 
\begin{equation}
\braket{r} \approx \left\{ \begin{array}{cc}0.5307, & \mathrm{Gaussian\,Orthogonal\,Ensemble\;(GOE)}\\ 0.3863, &\mathrm{Poisson}\end{array}\right. .\nonumber
\end{equation}
When disorder is increased ($\alpha$ is large enough), we expect to see a transition from the chaotic GOE distribution, where the energies \textit{repel} each other and maintain their spacing, to the Many-body Localized (MBL) phase, where the system fails to thermalize. 

Figure~\ref{fig:disorder} shows both the transition of $\braket{r}$ with respect to disorder strength $\alpha$ in (\textit{a}) as well as the distributions of energy level spacings for 50 realizations at $L=6$ overlaid with the Poisson and GOE distributions in (\textit{b}). In Figure~\ref{fig:disorder}(\textit{a}), it is clear that there are finite-size system effects, but even for a system of only $L=6$, one can see the crossover and the two expected values for $\braket{r}$ at different $\alpha$ strengths. Then Figure~\ref{fig:disorder}(\textit{b}) clearly shows the movement to Poisson at large $\alpha$. 

\begin{figure}[h!]
    \centering
    \includegraphics[width=0.46\textwidth]{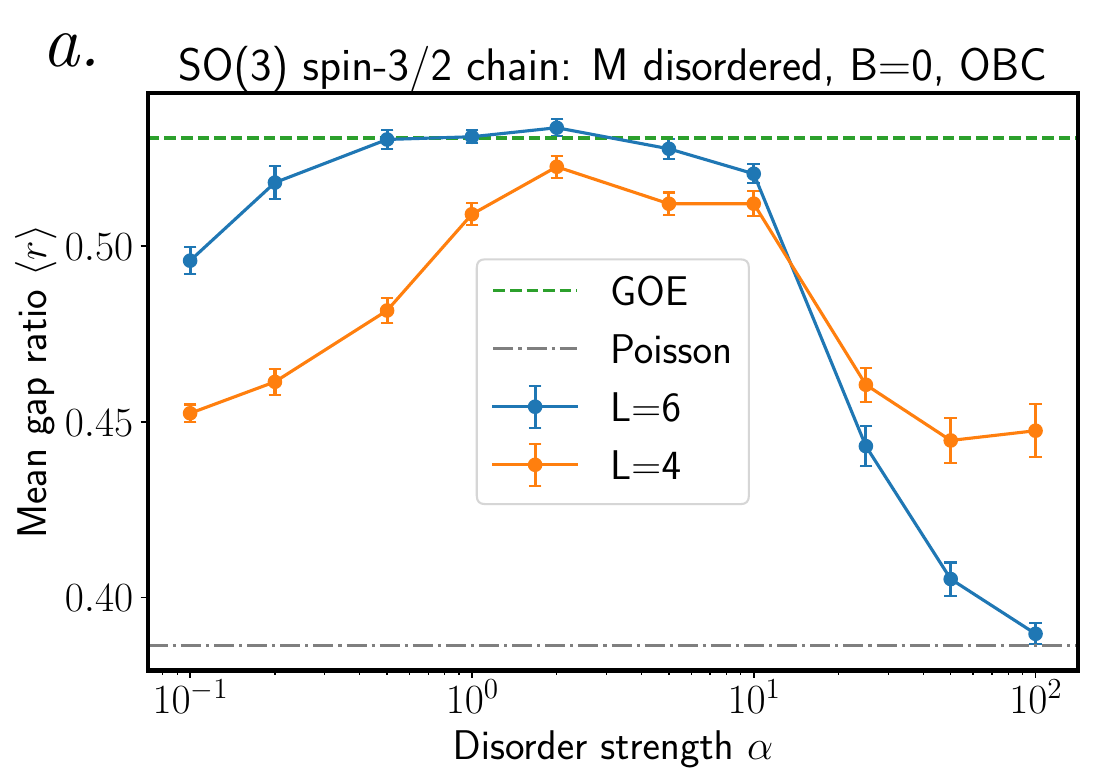}
    \includegraphics[width=0.46\textwidth]{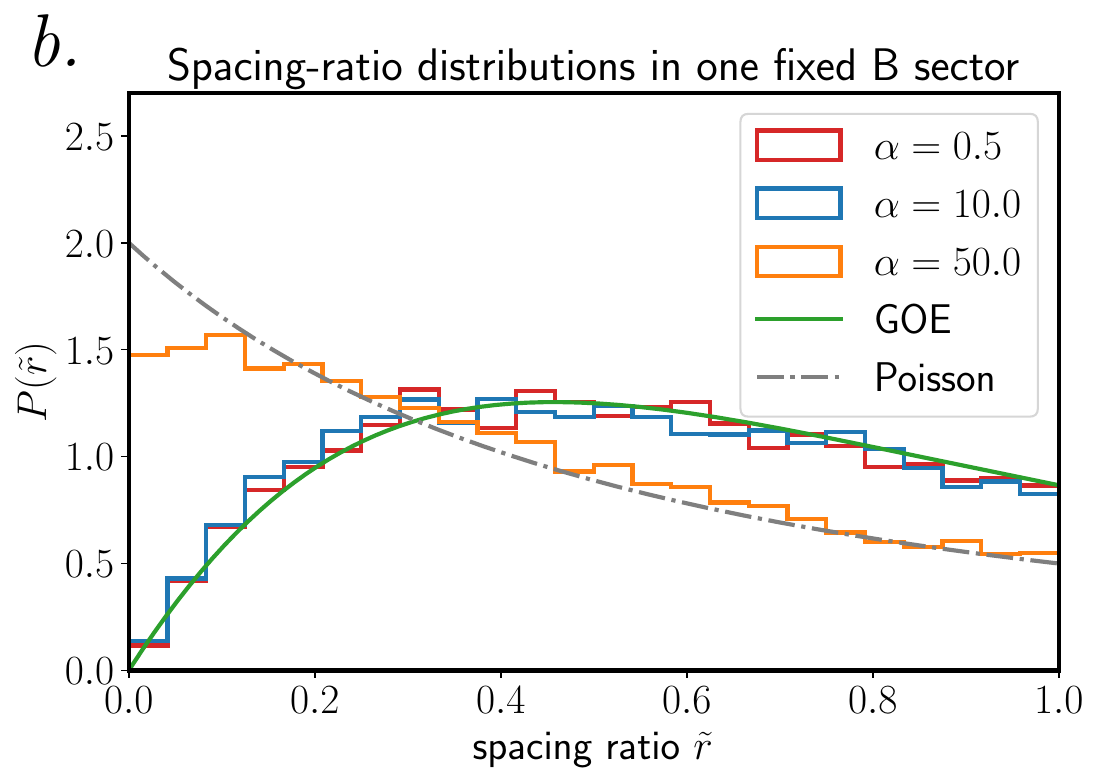}
    \caption{(\textit{a}) The ETH to MBL transition for $L=4$ and $L=6$ as disorder in $m$ is increased. 
    The former lattice shows strong
    deviations from the infinite volume values. (\textit{b}) The level spacing distributions for $L=6$ in the $B=0$ sector.}
    \label{fig:disorder}
\end{figure}

\begin{figure}[h!]
    \centering
    \includegraphics[width=0.49\textwidth]{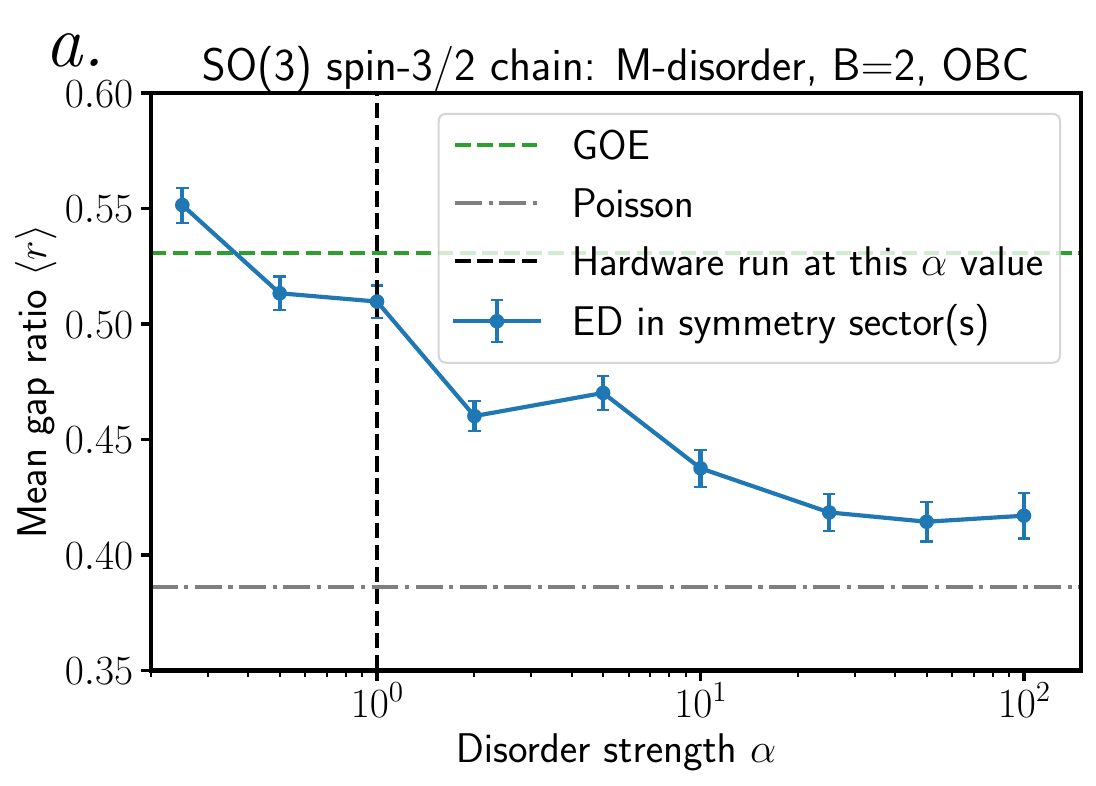}
    \includegraphics[width=0.49\textwidth]{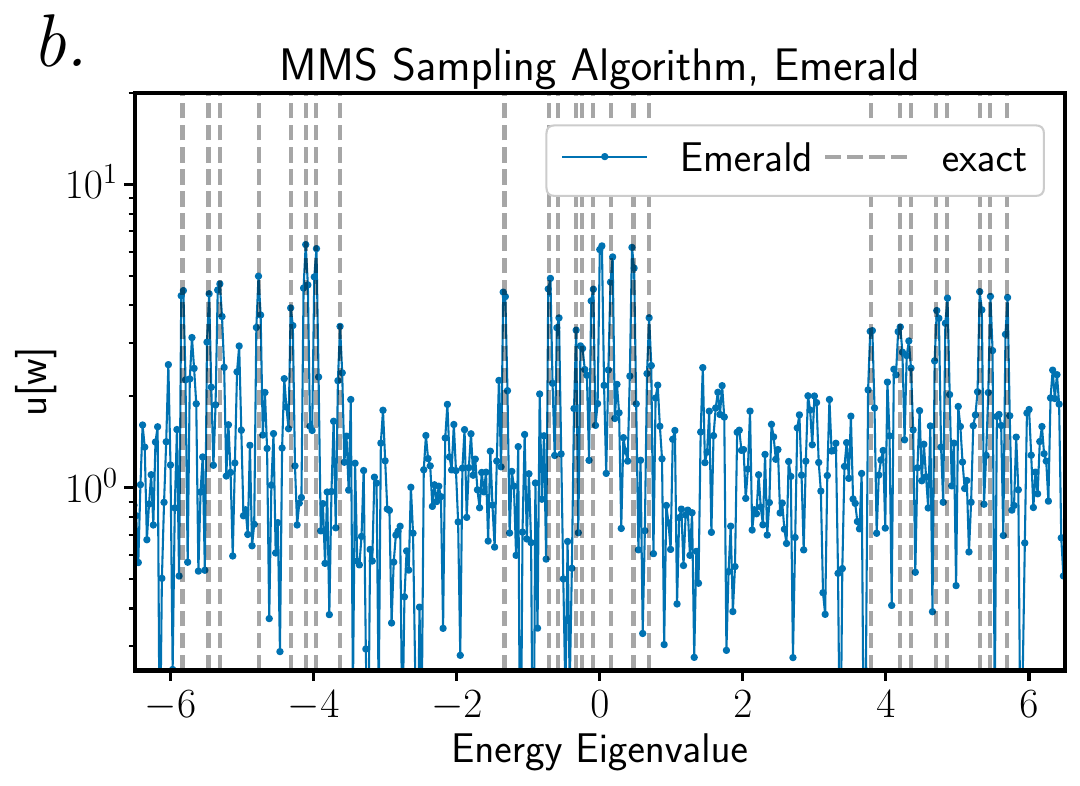}
    \caption{(\textit{a}) The MBL transition for the $B=2$ sector for $L=4$ (simulator). (\textit{b}) Emerald real hardware data for one realization where $\alpha=1.0$.}
    \label{fig:disorder_realhardware}
\end{figure}

For a near-term hardware proof-of-principle of the novel aspects of this algorithm, we resolve the center of the energy spectrum for a larger baryon number sector so as to keep the spectrum small. Thus we also demonstrate the use of a modified maximally mixed state made up of only states in a partular baryon number sector. We emphasize that for sufficiently coherent quantum hardware and a constant number of desired energies to resolve, this algorithm is polynomial-time. To obtain this scaling, one would use Trotterization in order to get necessary amount of time evolution (and thus the middle portion of the spectrum). Here for these near-term calculations however we simply use the diagonal form of the Hamiltonian in order to focus on the spectrum resolution capabilities of near-term hardware alone, since that is the novel aspect of this work. A constant can be added to the spectrum to shift it so that it is centered about zero and we emphasize that this can be estimated in polynomial time \cite{keating2015spectra,mon1975statistical,benet2003review,kota2014embedded,kempe2006complexity,macwilliams1977theory},
 and so the focus for these real hardware calculations will be on the resolution of the energy spacings for the middle of the spectrum.
Figure~\ref{fig:disorder_realhardware} illustrates this small-spectrum calculation where $B=2$ and $L=4$, resulting in a Hilbert space dimension of $31$, fitting nicely within the state space of five qubits. The Hamiltonian used for this calculation is $H_{B=2}\oplus E_{\mathrm{add}}$, where $E_{\mathrm{add}}$ is a number at a different scale from the rest of the spectrum that can be cleanly separated, and is added in order to fill out the state space of the five qubits. The circuits themselves are 78 two-qubit gates deep, and contain 93 two-qubit gates total.  Figure~\ref{fig:disorder_realhardware}(\textit{a}) shows simulator data that computes $\left\langle r\right\rangle$ from the middle $25\%$ of the spectrum for $400$ realizations of disordered mass couplings with $\alpha = 0.25, 0.5, 1, 2, 5, 10,
25, 50, 100$. It shows that it is still possible to see the MBL crossover even with such a small Hilbert space. The bottom panel shows real hardware data from Emerald for the middle of the spectrum for a single realization at $\alpha=1.0$. Dynamical decoupling is used and $T = 209.41$ with $N=712$ steps. There were 430 shots for each measurement. We used a Hann window in the Fourier transform in order to improve the distinction of the peaks. More details are given in Appendix \ref{sec:disorder_runs}. Here it is clear that the hardware is able to resolve the middle of the spectrum, as desired, and would be able to do so in polynomial time once there is enough coherent depth available for sufficient trotter steps. For the spectrum there is a median difference in energy of $0.0043$ between the recovered peaks and the exact peaks; without using the Hann window, the median difference is $0.0142$. Using the middle $25\%$ of the peaks and computing $\left\langle r \right\rangle$ for this random set of disordered couplings gives $ \left\langle r \right\rangle= 0.58\pm 0.11$, compared with $0.54\pm 0.08$ for the exact spectrum. We emphasize that while we should be seeing GOE statistics here, and these numbers appear consistent with GOE statistics, a single set of random couplings is not enough to actually show whether we have GOE or Poisson statistics. This is true for the exact calculation as well as the real hardware calculation. However, with sufficient incidences of real hardware experiments of random couplings, it would be possible to reduce the error and get a conclusive result.

\subsection{Extracting Densities of State}\label{sec:dos_results}

Maximally mixed state spectral sampling also shows promise in augmenting classical state-of-the-art density of states methods with quantum inputs. In Appendix \ref{sec:dos_methods} we give an overview on the background literature on density of states methods for lattice field theories, especially the Wang-Landau and Langfeld-Lucini-Rago (LLR) methods. Here, we discuss how outputs from spectral sampling can enhance the LLR method.

LLR can be effective even in theories with a sign problem, because it replaces the original high-dimensional path integral by a one-dimensional integral over the density of states \cite{Langfeld2012,Langfeld2016}. The nearly constant relative precision of LLR over a large dynamic range can improve the associated overlap problem, since contributions which are exponentially suppressed in an ordinary importance-sampled ensemble may still be determined accurately; this strategy has successfully treated a strong sign problem in the $Z_3$ spin model \cite{Langfeld2016}. LLR does not, however, generically remove the complex sign problem, because the resolved one-dimensional integral may still involve severe oscillatory contributions.

Our proposal is that the Fourier-domain outputs of the quantum spectral sampling algorithm may be postprocessed via an LLR-inspired procedure to estimate a smooth density of states. The usefulness of such an estimate depends on the local level spacing. Taking, for example, a many-body spectrum with a Gaussian coarse envelope,
\begin{equation}
    \varrho_H(E)
    \simeq
    \frac{D}{\sqrt{2\pi}\sigma}
    \exp\!\left[-\frac{(E-\mu)^2}{2\sigma^2}\right],
\end{equation}
the characteristic spacing between neighboring levels is
\begin{equation}
    \Delta_{\rm level}(E)
    \sim
    \frac{1}{\varrho_H(E)}
    =
    \frac{\sqrt{2\pi}\sigma}{D}
    \exp\!\left[\frac{(E-\mu)^2}{2\sigma^2}\right].
\end{equation}
It is smallest near the center, where it scales as $\sigma/D$, and grows rapidly towards the tails. Since $D$ increases exponentially with volume for a generic many-body system, individual levels in the mid-spectrum eventually become impossible to separate at any fixed Fourier resolution. Discrete levels may still be isolated where the resolution permits, but in general the mid-spectrum is naturally represented by a smooth density.

We propose replacing the restricted spectral averages in the LLR midpoint condition by sums over the measured weights in each energy window. If $p_j$ is the measured weight associated with a bin centered at $E_j$, the discrete midpoint condition is
\begin{equation}
    \frac{
        \displaystyle
        \sum_{j\in\mathcal{I}_k}
        (E_j-\bar E_k)\,
        p_j e^{-a_k(E_j-\bar E_k)}
    }{
        \displaystyle
        \sum_{j\in\mathcal{I}_k}
        p_j e^{-a_k(E_j-\bar E_k)}
    }
    =0,
\end{equation}
where $\mathcal{I}_k$ denotes the interval containing $\bar E_k$. The resulting local slopes may then be joined using the same interpolation of $\ln\varrho_H(E)$, with normalization fixed by
\begin{equation}
    \int dE\,\widehat{\varrho}_H(E)=D.
\end{equation}

The window width $\Delta E$ must therefore be large enough to contain meaningful spectral information, and in particular cannot be taken much below
\begin{equation}
    \delta E_{\rm res}\sim\frac{2\pi}{T}.
\end{equation}
For the Gaussian example, the interpolation bias derived above is at most
\begin{equation}
    \frac{\widetilde{\varrho}_H}{\varrho_H}-1
    \simeq
    \frac{\Delta E^2}{8\sigma^2}
\end{equation}
within a centered window; this bias may be reduced at the cost of increasing controlled-evolution time $T$. On the other hand, taking $\Delta E$ too small leaves too little weight in each window and amplifies statistical errors. Errors from Trotterization, hardware noise and spectral leakage also contribute via the measured bin weights.

\begin{figure}[t!]
    \centering
    \includegraphics[width=0.49\textwidth]{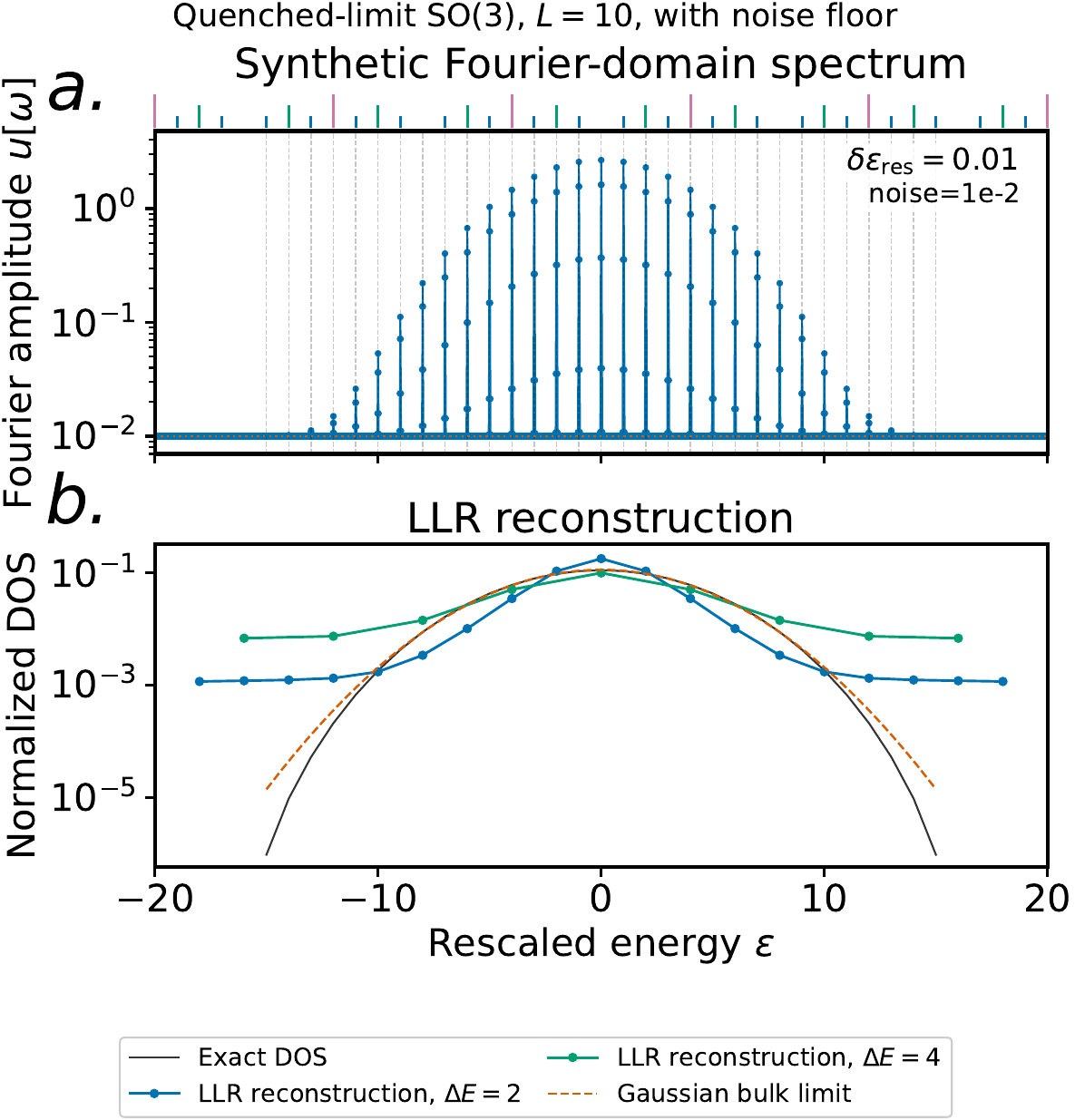}
    \includegraphics[width=0.49\textwidth]{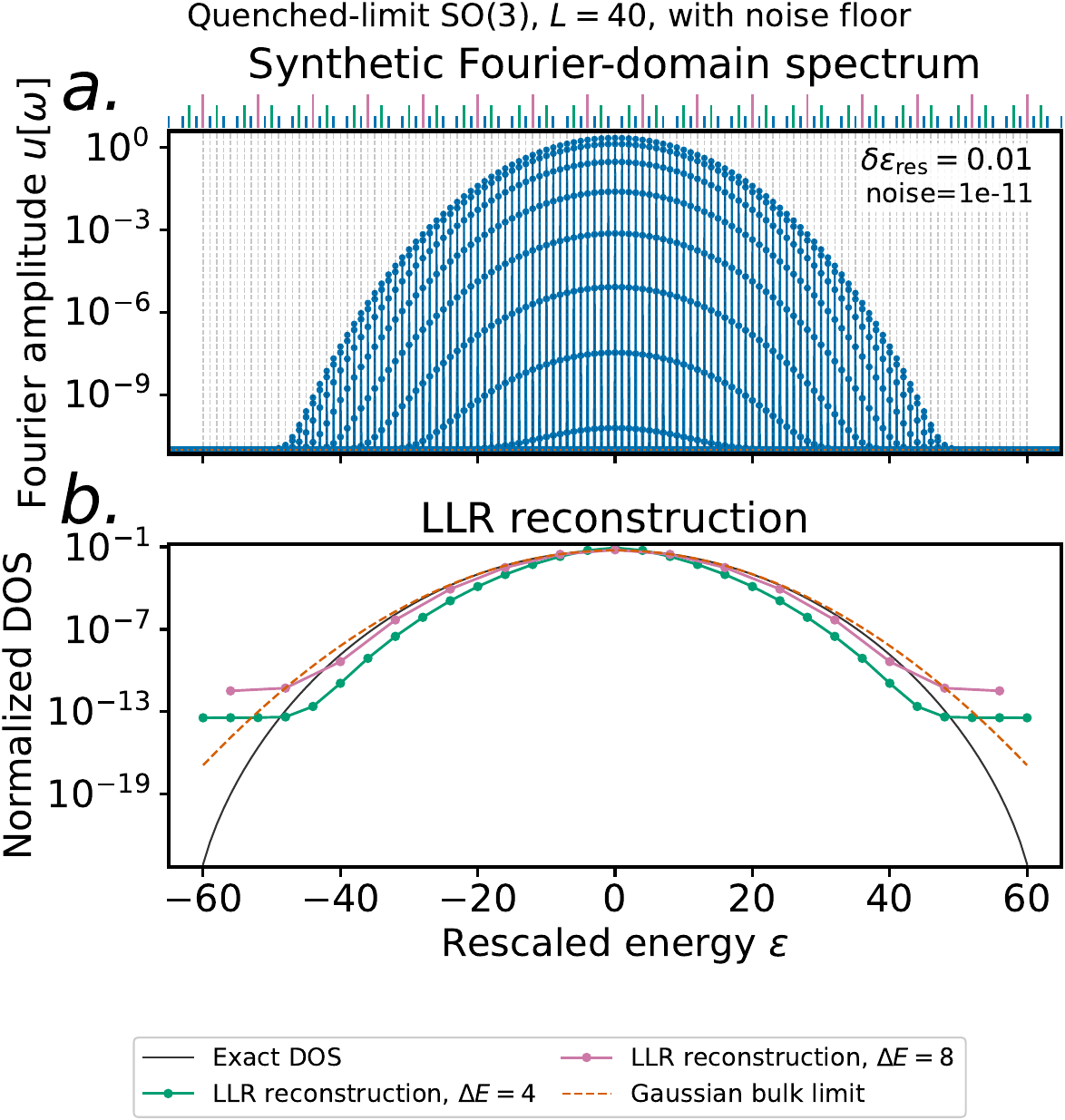}
    \caption{The density of states of the $SO(3)$ QLM in the quenched limit $m \rightarrow \infty$, estimated from synthetic Fourier data via the modified LLR technique, for (\textit{a}) $L=10$ sites and (\textit{b}) $L=40$ sites. In each case, a noise floor is inserted at approximately the geometric midpoint between the minimum and maximum spectral amplitude.}
    \label{fig:synthetic_LLR_10_40}
\end{figure}

For independent measurements and a normalized linear Fourier reconstruction, the uncertainty in a bin weight scales with the total number of measurements as
\begin{equation}
    \sigma_{p_j}
    =
    O\!\left(N_{\rm meas}^{-1/2}\right),
    \qquad
    N_{\rm meas}\sim S N_t,
\end{equation}
when $S$ shots are taken at each of $N_t$ time points, before correlations between neighboring bins are included. If a window contains $n_k$ bins with comparable relative uncertainty $r_k$, weak correlations, and a reweighted variance of order $\Delta E^2$, linearization of the midpoint condition gives the indicative estimate
\begin{equation}
    \sigma_{a_k}
    =
    O\!\left(
        \frac{r_k}{\Delta E\sqrt{n_k}}
    \right).
\end{equation}
These gradient errors accumulate across successive windows, without accounting for covariance effects. The optimum $\Delta E$ is therefore set by a balance between curvature bias, Fourier resolution and statistical noise. Explicating the details of this balance is beyond the scope of this paper, but would be of immense value for applying quantum spectral sampling to larger systems.

We close this section by projecting how our $LLR$ with spectral sampling strategy could perform on the $SO(3)$ QLM when much higher-quality quantum spectral sampling data is available, i.e. via mid-term quantum hardware. In the quenched limit $m \rightarrow \infty$ we have an exact analytical understanding of the model eigenspectrum, which may be chosen to be diagonal in the $Z$-basis. Therefore we are able to generate a clear Fourier spectrum, accounting for degeneracies; this is of course not physically realistic, since it does not account for peak shape, variable broadening, etc. We only include a fixed peak width of $0.01$ in units of energy, which is much smaller than the spectral gaps but larger than the resolution of the synthetic spectrum, and a noise floor set approximately halfway between the minimum and maximum spectral amplitude. Figure 
\ref{fig:synthetic_LLR_10_40} shows the DOS estimations obtained via LLR on this data, for a selection of bin widths; we see that bin width $4$ gives preferable results for $L=10$ sites, while width $8$ gives better accuracy for $L=40$. The numerical results confirm our expectation that the total method delivers its best precision in the mid-spectrum, and rapidly flatlines in the tail regions once the spectral data becomes saturated with noise.

\section{Discussion}\label{sec:conclusion}

We have performed maximally mixed state quantum spectral sampling for an interacting non-Abelian lattice gauge theory: the $SO(3)$-symmetric quantum link model in $(1+1)$d. This model contains nontrivial features such as non-commuting multibody Pauli terms and numerous degeneracies in the energy spectrum. This method makes it possible to obtain the full energy spectrum of the Hamiltonian, or instead, to obtain specific windows of the spectrum using circuits that scale in polynomial time. By initializing with the maximally mixed state, the spectral outputs are unbiased even without any prior physics knowledge of the Hamiltonian.

We explored several strategies for mitigating the cost-scaling of quantum time evolution with the duration required, in particular both higher-order Trotter steps and Pauli-frame-optimised circuit design. Numerical simulations established the time-durations for which each Trotter order gives optimal circuit depth, in particular order $n=4$ for a wider range of physically-relevant time durations. Our circuit compilation relied upon a recursive search over Pauli-frame transitions, which proved to identify much shallower Trotter circuits than comparable heuristics, especially for Hamiltonians with many multi-qubit interaction terms.

We explored the spectral sampling algorithm on real hardware for two cases: the full spectrum of a Hamiltonian for a small system, which can be used for exploring mass gaps and phase transitions, and the mid-energy spectrum of larger systems, which is of use for determining the dynamics of disordered systems. Two features of this algorithm are particularly nice for real hardware calculations. The first feature is that only a single qubit need be measured, allowing for very cheap readout error mitigation (the confusion matrix is only $2\times 2$). The second feature is that only a single additional qubit is necessary to produce the maximally mixed state, because it can be entangled with each of the time evolution qubits in sequence, being reset after each entanglement. This reduces the number of qubits necessary for this work to $2+2L$ rather than $1+4L$, where $L$ is the number of sites in the system. This effectively doubles the accessible system size for a given quantum hardware, or allows for greater optimization in finding the best qubits within a given hardware calibration. 

For the low-energy spectrum case we worked with $L=1$, a two-qubit system where Trotterization is unnecessary, and tested the dynamical circuit capabilities of the IQM Emerald device. We found good resolution of the four energies of this small system and consistent results between the version that reset the garbage qubit and the version that did not (\textit{fresh}), thus showing the viability of the dynamical circuit version on noisy hardware. For the mid-energy spectrum test on real hardware to test the disorder application, we used a baryon sector resolved version of the maximally mixed state, representing a variation on this method that is quite useful for symmetry-limited questions. We performed an energy shift to focus on the middle of the spectrum. This modification allows for the determination of Poisson versus GOE distribution of the energy spacings (and thus the ETH to MBL transition) in a way that scales polynomially on quantum hardware. With five qubits for the time-evolution, one qubit for the control operation, and five more garbage qubits, we constructed circuits on IQM Emerald that were 78 two-qubit gates deep for a simulation in the GOE region. The results for $\left\langle r\right\rangle$ on real hardware agreed with the exact results within errors. 

The last application we introduced is that of extracting densities of states from spectral sampling data. We proposed a hybrid algorithm for treating this output data with the classical LLR machinery developed for lattice field theory. There is a tradeoff to this technique: the window width must exceed the Fourier resolution $2\pi/T$, and is bounded from above by the curvature bias $\Delta E^2/8\sigma^2$, and from below by the statistical error in the reconstructed slopes, which accumulates across windows. Determining the optimum, and the correlations induced between windows by the shared time series, is beyond the scope of this work but is a natural next step.

There are several clear extensions to this work. The maximally mixed spectral sampling method carries over to many types of models, but is well-suited to models that resolve into many symmetry sectors. Many QLMs are thus candidates for the application of spectral sampling, e.g. $SU(2)$ and $SU(3)$ link models for QCD and $U(1)$ for QED. Quantum spectral sampling may be applied naturally in $(2+1)d$, which again is an important physical model scenario. These models may introduce plaquette terms and a correspondingly much richer commuting-block structure, which is exactly the setting where the automated Pauli-frame compiler should provide great savings in circuit depth. Finally, it should be possible to push real hardware to contain more Trotter steps and demonstrate more fully the polynomial-time aspect of the algorithm, since circuits as deep as $\sim 80$ two-qubit gates can now be tolerated by superconducting qubits, as long as simple error mitigation techniques with low shot-cost are implemented. All these directions lead to expanded access to the full \textit{fingerprint} of nontrivial quantum theories.

\section*{Acknowledgments}
 We acknowledge access to IQM Emerald through the AWS Cloud Credit for
Research program and thank Sebastian Stern for developing this partnership 
and for guidance with experiments. We also thank Sebastian Hassinger and David Maurer 
who have also been key in this AWS partnership. We acknowledge access to IBM Eagle through the National Quantum Computing Center (NQCC) and IBM under NQCC's Quantum Computing Access Programme (QCAP).
The authors would like to thank Oleksandr Kyriikenko, Subhayan Roy Moulik, Jonte Hance, 
Simon Williams, and Sergii Strelchuk for productive discussions related to this work.
G.V.G, D.B. and B.C. would like to thank STFC (UK) consolidated grant ST/X000583/1, and G.V.G and B.C. would further like to thank STFC (UK) grant ST/W006251/1 and EPSRC grant EP/W032635/1.
D.B. would like to acknowledge continued support from the Alexander von Humboldt Foundation 
(Germany) in the context of the research fellowship for experienced researchers.

\bibliography{references}

@article{Bauer2016,
  title = {Hybrid Quantum-Classical Approach to Correlated Materials},
  author = {Bauer, Bela and Wecker, Dave and Millis, Andrew J. and Hastings, Matthew B. and Troyer, Matthias},
  journal = {Phys. Rev. X},
  volume = {6},
  issue = {3},
  pages = {031045},
  numpages = {11},
  year = {2016},
  month = {Sep},
  publisher = {American Physical Society},
  doi = {10.1103/PhysRevX.6.031045},
  url = {https://link.aps.org/doi/10.1103/PhysRevX.6.031045}
}

@article{Dumitrescu2018,
  title = {Cloud Quantum Computing of an Atomic Nucleus},
  author = {Dumitrescu, E. F. and McCaskey, A. J. and Hagen, G. and Jansen, G. R. and Morris, T. D. and Papenbrock, T. and Pooser, R. C. and Dean, D. J. and Lougovski, P.},
  journal = {Phys. Rev. Lett.},
  volume = {120},
  issue = {21},
  pages = {210501},
  numpages = {6},
  year = {2018},
  month = {May},
  publisher = {American Physical Society},
  doi = {10.1103/PhysRevLett.120.210501},
  url = {https://link.aps.org/doi/10.1103/PhysRevLett.120.210501}
}

@article{Motta2019,
   title={Determining eigenstates and thermal states on a quantum computer using quantum imaginary time evolution},
   volume={16},
   ISSN={1745-2481},
   url={http://dx.doi.org/10.1038/s41567-019-0704-4},
   DOI={10.1038/s41567-019-0704-4},
   number={2},
   journal={Nature Physics},
   publisher={Springer Science and Business Media LLC},
   author={Motta, Mario and Sun, Chong and Tan, Adrian T. K. and O’Rourke, Matthew J. and Ye, Erika and Minnich, Austin J. and Brandão, Fernando G. S. L. and Chan, Garnet Kin-Lic},
   year={2019},
   month=Nov, pages={205–210} }

@article{Farrell2024,
  title = {Scalable Circuits for Preparing Ground States on Digital Quantum Computers: The Schwinger Model Vacuum on 100 Qubits},
  author = {Farrell, Roland C. and Illa, Marc and Ciavarella, Anthony N. and Savage, Martin J.},
  journal = {PRX Quantum},
  volume = {5},
  issue = {2},
  pages = {020315},
  numpages = {32},
  year = {2024},
  month = {Apr},
  publisher = {American Physical Society},
  doi = {10.1103/PRXQuantum.5.020315},
  url = {https://link.aps.org/doi/10.1103/PRXQuantum.5.020315}
}

@article{Peruzzo2013,
    author = "Peruzzo, Alberto and McClean, Jarrod and Shadbolt, Peter and Yung, Man-Hong and Zhou, Xiao-Qi and Love, Peter J. and Aspuru-Guzik, Al{\'a}n and O'Brien, Jeremy L.",
    title = "{A variational eigenvalue solver on a photonic quantum processor}",
    eprint = "1304.3061",
    archivePrefix = "arXiv",
    primaryClass = "quant-ph",
    doi = "10.1038/ncomms5213",
    journal = "Nature Commun.",
    volume = "5",
    number = "1",
    pages = "4213",
    year = "2014"
}

@article{Maiti2024,
    author = "Maiti, Sandip and Banerjee, Debasish and Chakraborty, Bipasha and Huffman, Emilie",
    title = "{Spontaneous symmetry breaking in a SO(3) non-Abelian lattice gauge theory in 2+1D with quantum algorithms}",
    eprint = "2409.07108",
    archivePrefix = "arXiv",
    primaryClass = "hep-lat",
    doi = "10.1103/PhysRevResearch.7.013283",
    journal = "Phys. Rev. Res.",
    volume = "7",
    number = "1",
    pages = "013283",
    year = "2025"
}

@article{Stetcu2021,
    author = "Stetcu, I. and Baroni, A. and Carlson, J.",
    title = "{Variational approaches to constructing the many-body nuclear ground state for quantum computing}",
    eprint = "2110.06098",
    archivePrefix = "arXiv",
    primaryClass = "nucl-th",
    reportNumber = "LA-UR-21-29364",
    doi = "10.1103/PhysRevC.105.064308",
    journal = "Phys. Rev. C",
    volume = "105",
    number = "6",
    pages = "064308",
    year = "2022"
}

@article{Ciavarella2021,
    author = "Ciavarella, Anthony N. and Chernyshev, Ivan A.",
    title = "{Preparation of the SU(3) lattice Yang-Mills vacuum with variational quantum methods}",
    eprint = "2112.09083",
    archivePrefix = "arXiv",
    primaryClass = "quant-ph",
    reportNumber = "IQuS@UW-21-017",
    doi = "10.1103/PhysRevD.105.074504",
    journal = "Phys. Rev. D",
    volume = "105",
    number = "7",
    pages = "074504",
    year = "2022"
}

@article{Rosanowski:2025fhr,
    author = "Rosanowski, Emil Otis and Eisinger, Jurek and Funcke, Lena and Poschinger, Ulrich and Schmidt-Kaler, Ferdinand",
    title = "{Sample-Based Krylov Quantum Diagonalization for the Schwinger Model on Trapped-Ion and Superconducting Quantum Processors}",
    eprint = "2510.26951",
    archivePrefix = "arXiv",
    primaryClass = "quant-ph",
    month = "10",
    year = "2025"
}

@article{Schuster2023,
    author = {Schuster, Stephan and K{\"u}hn, Stefan and Funcke, Lena and Hartung, Tobias and Pleinert, Marc-Oliver and von Zanthier, Joachim and Jansen, Karl},
    title = "{Studying the phase diagram of the three-flavor Schwinger model in the presence of a chemical potential with measurement- and gate-based quantum computing}",
    eprint = "2311.14825",
    archivePrefix = "arXiv",
    primaryClass = "hep-lat",
    doi = "10.1103/PhysRevD.109.114508",
    journal = "Phys. Rev. D",
    volume = "109",
    number = "11",
    pages = "114508",
    year = "2024"
}

@article{Lumia2021,
    author = "Lumia, Luca and Torta, Pietro and Mbeng, Glen B. and Santoro, Giuseppe E. and Ercolessi, Elisa and Burrello, Michele and Wauters, Matteo M.",
    title = "{Two-Dimensional Z2 Lattice Gauge Theory on a Near-Term Quantum Simulator: Variational Quantum Optimization, Confinement, and Topological Order}",
    eprint = "2112.11787",
    archivePrefix = "arXiv",
    primaryClass = "quant-ph",
    reportNumber = "QDEV CMT NBI 2021",
    doi = "10.1103/PRXQuantum.3.020320",
    journal = "PRX Quantum",
    volume = "3",
    number = "2",
    pages = "020320",
    year = "2022"
}

@article{Chakraborty2020,
    author = "Chakraborty, Bipasha and Honda, Masazumi and Izubuchi, Taku and Kikuchi, Yuta and Tomiya, Akio",
    title = "{Classically emulated digital quantum simulation of the Schwinger model with a topological term via adiabatic state preparation}",
    eprint = "2001.00485",
    archivePrefix = "arXiv",
    primaryClass = "hep-lat",
    doi = "10.1103/PhysRevD.105.094503",
    journal = "Phys. Rev. D",
    volume = "105",
    number = "9",
    pages = "094503",
    year = "2022"
}

@article{McClean2016,
    author = "McClean, Jarrod R. and Kimchi-Schwartz, Mollie E. and Carter, Jonathan and de Jong, Wibe A.",
    title = "{Hybrid quantum-classical hierarchy for mitigation of decoherence and determination of excited states}",
    eprint = "1603.05681",
    archivePrefix = "arXiv",
    primaryClass = "quant-ph",
    doi = "10.1103/PhysRevA.95.042308",
    journal = "Phys. Rev. A",
    volume = "95",
    number = "4",
    pages = "042308",
    year = "2017"
}

@article{Higgott2018,
    author = "Higgott, Oscar and Wang, Daochen and Brierley, Stephen",
    title = "{Variational Quantum Computation of Excited States}",
    eprint = "1805.08138",
    archivePrefix = "arXiv",
    primaryClass = "quant-ph",
    doi = "10.22331/q-2019-07-01-156",
    journal = "Quantum",
    volume = "3",
    pages = "156",
    year = "2019"
}

@article{Nakanishi2019,
    author = "Nakanishi, Ken M. and Mitarai, Kosuke and Fujii, Keisuke",
    title = "{Subspace-search variational quantum eigensolver for excited states}",
    eprint = "1810.09434",
    archivePrefix = "arXiv",
    primaryClass = "quant-ph",
    doi = "10.1103/PhysRevResearch.1.033062",
    journal = "Phys. Rev. Res.",
    volume = "1",
    number = "3",
    pages = "033062",
    year = "2019"
}

@article{Atas2021,
    author = "Atas, Yasar Y. and Zhang, Jinglei and Lewis, Randy and Jahanpour, Amin and Haase, Jan F. and Muschik, Christine A.",
    title = "{SU(2) hadrons on a quantum computer via a variational approach}",
    eprint = "2102.08920",
    archivePrefix = "arXiv",
    primaryClass = "quant-ph",
    doi = "10.1038/s41467-021-26825-4",
    journal = "Nature Commun.",
    volume = "12",
    number = "1",
    pages = "6499",
    year = "2021"
}

@article{Guo2024,
    author = {Guo, Yibin and Angelides, Takis and Jansen, Karl and K{\"u}hn, Stefan},
    title = "{Concurrent VQE for Simulating Excited States of the Schwinger Model}",
    eprint = "2407.15629",
    archivePrefix = "arXiv",
    primaryClass = "quant-ph",
    month = "7",
    year = "2024"
}

@article{Smith2019,
    author = "Smith, Adam and Kim, M. S. and Pollmann, Frank and Knolle, Johannes",
    title = "{Simulating quantum many-body dynamics on a current digital quantum computer}",
    eprint = "1906.06343",
    archivePrefix = "arXiv",
    primaryClass = "quant-ph",
    doi = "10.1038/s41534-019-0217-0",
    journal = "npj Quantum Inf.",
    volume = "5",
    pages = "106",
    year = "2019"
}

@article{Roggero2019,
    author = "Roggero, Alessandro and Li, Andy C. Y. and Carlson, Joseph and Gupta, Rajan and Perdue, Gabriel N.",
    title = "{Quantum Computing for Neutrino-Nucleus Scattering}",
    eprint = "1911.06368",
    archivePrefix = "arXiv",
    primaryClass = "quant-ph",
    reportNumber = "LA-UR-19-31323, INT-PUB-19-052, FERMILAB-PUB-19-547-QIS",
    doi = "10.1103/PhysRevD.101.074038",
    journal = "Phys. Rev. D",
    volume = "101",
    number = "7",
    pages = "074038",
    year = "2020"
}

@article{Martinez2016,
    author = "Martinez, E. A. and others",
    title = "{Real-time dynamics of lattice gauge theories with a few-qubit quantum computer}",
    eprint = "1605.04570",
    archivePrefix = "arXiv",
    primaryClass = "quant-ph",
    doi = "10.1038/nature18318",
    journal = "Nature",
    volume = "534",
    pages = "516--519",
    year = "2016"
}

@article{deJong2021,
    author = "de Jong, Wibe A. and Lee, Kyle and Mulligan, James and P{\l}osko{\'n}, Mateusz and Ringer, Felix and Yao, Xiaojun",
    title = "{Quantum simulation of nonequilibrium dynamics and thermalization in the Schwinger model}",
    eprint = "2106.08394",
    archivePrefix = "arXiv",
    primaryClass = "quant-ph",
    reportNumber = "MIT-CTP/5308",
    doi = "10.1103/PhysRevD.106.054508",
    journal = "Phys. Rev. D",
    volume = "106",
    number = "5",
    pages = "054508",
    year = "2022"
}

@article{DAlessio2015,
    author = "D'Alessio, Luca and Kafri, Yariv and Polkovnikov, Anatoli and Rigol, Marcos",
    title = "{From quantum chaos and eigenstate thermalization to statistical mechanics and thermodynamics}",
    eprint = "1509.06411",
    archivePrefix = "arXiv",
    primaryClass = "cond-mat.stat-mech",
    doi = "10.1080/00018732.2016.1198134",
    journal = "Adv. Phys.",
    volume = "65",
    number = "3",
    pages = "239--362",
    year = "2016"
}

@article{Huffman2021,
    author = "Huffman, Emilie and Garc{\'\i}a Vera, Miguel and Banerjee, Debasish",
    title = "{Toward the real-time evolution of gauge-invariant $\mathbb Z_2$ and $U(1)$ quantum link models on noisy intermediate-scale quantum hardware with error mitigation}",
    eprint = "2109.15065",
    archivePrefix = "arXiv",
    primaryClass = "quant-ph",
    doi = "10.1103/PhysRevD.106.094502",
    journal = "Phys. Rev. D",
    volume = "106",
    number = "9",
    pages = "094502",
    year = "2022"
}

@article{Nagano2023,
    author = "Nagano, Lento and Bapat, Aniruddha and Bauer, Christian W.",
    title = "{Quench dynamics of the Schwinger model via variational quantum algorithms}",
    eprint = "2302.10933",
    archivePrefix = "arXiv",
    primaryClass = "hep-ph",
    doi = "10.1103/PhysRevD.108.034501",
    journal = "Phys. Rev. D",
    volume = "108",
    number = "3",
    pages = "034501",
    year = "2023"
}

@article{Hall2021,
    author = "Hall, Benjamin and Roggero, Alessandro and Baroni, Alessandro and Carlson, Joseph",
    title = "{Simulation of collective neutrino oscillations on a quantum computer}",
    eprint = "2102.12556",
    archivePrefix = "arXiv",
    primaryClass = "quant-ph",
    doi = "10.1103/PhysRevD.104.063009",
    journal = "Phys. Rev. D",
    volume = "104",
    number = "6",
    pages = "063009",
    year = "2021"
}

@article{Ciavarella2020,
    author = "Ciavarella, Anthony",
    title = "{Algorithm for quantum computation of particle decays}",
    eprint = "2007.04447",
    archivePrefix = "arXiv",
    primaryClass = "hep-th",
    reportNumber = "INT-PUB-20-027",
    doi = "10.1103/PhysRevD.102.094505",
    journal = "Phys. Rev. D",
    volume = "102",
    number = "9",
    pages = "094505",
    year = "2020"
}

@article{Oganesyan2007,
    author = "Oganesyan, Vadim and Huse, David A.",
    title = "{Localization of interacting fermions at high temperature}",
    doi = "10.1103/PhysRevB.75.155111",
    journal = "Phys. Rev. B",
    volume = "75",
    number = "15",
    pages = "155111",
    year = "2007"
}

@article{Atas2013,
    author = "Atas, Y. Y. and Bogomolny, E. and Giraud, O. and Roux, G.",
    title = "{Distribution of the Ratio of Consecutive Level Spacings in Random Matrix Ensembles}",
    doi = "10.1103/PhysRevLett.110.084101",
    journal = "Phys. Rev. Lett.",
    volume = "110",
    number = "8",
    pages = "084101",
    year = "2013"
}

@article{Pal2010,
    author = "Pal, Arijeet and Huse, David A.",
    title = "{Many-body localization phase transition}",
    eprint = "1003.2613",
    archivePrefix = "arXiv",
    primaryClass = "cond-mat.dis-nn",
    doi = "10.1103/PhysRevB.82.174411",
    journal = "Phys. Rev. B",
    volume = "82",
    number = "17",
    pages = "174411",
    year = "2010"
}

@article{Turner2017,
    author = "Turner, Christopher J. and Michailidis, Alexios A. and Abanin, Dmitry A. and Serbyn, Maksym and Papic, Zlatko",
    title = "{Weak ergodicity breaking from quantum many-body scars}",
    eprint = "1711.03528",
    archivePrefix = "arXiv",
    primaryClass = "quant-ph",
    doi = "10.1038/s41567-018-0137-5",
    journal = "Nature Phys.",
    volume = "14",
    pages = "745--749",
    year = "2018"
}

@article{Ferris2023,
    author = "Ferris, Kaelyn J. and Wang, Zihang and Hen, Itay and Kalev, Amir and Bronn, Nicholas T. and Vlcek, Vojtech",
    title = "{Exploiting Maximally Mixed States for Spectral Estimation by Time Evolution}",
    eprint = "2312.00687",
    archivePrefix = "arXiv",
    primaryClass = "quant-ph",
    month = "12",
    year = "2023"
}

@article{Langfeld2016,
  title={An efficient algorithm for numerical computations of continuous densities of states},
  author={Langfeld, Kurt and Lucini, Biagio and Pellegrini, Roberto and Rago, Antonio},
  journal={The European Physical Journal C},
  volume={76},
  number={6},
  pages={306},
  year={2016},
  publisher={Springer}
}

@article{Kitaev1995,
  title={Quantum measurements and the Abelian stabilizer problem},
  author={Kitaev, A Yu},
  journal={arXiv preprint quant-ph/9511026},
  year={1995}
}

@article{Somma2002,
  title = {Simulating physical phenomena by quantum networks},
  author = {Somma, R. and Ortiz, G. and Gubernatis, J. E. and Knill, E. and Laflamme, R.},
  journal = {Phys. Rev. A},
  volume = {65},
  issue = {4},
  pages = {042323},
  numpages = {17},
  year = {2002},
  month = {Apr},
  publisher = {American Physical Society},
  doi = {10.1103/PhysRevA.65.042323},
  url = {https://link.aps.org/doi/10.1103/PhysRevA.65.042323}
}

@article{Somma2019,
    author = "Somma, Rolando D.",
    title = "{Quantum eigenvalue estimation via time series analysis}",
    doi = "10.1088/1367-2630/ab5c60",
    journal = "New J. Phys.",
    volume = "21",
    number = "12",
    pages = "123025",
    year = "2019"
}

@article{Stenger2022,
  title = {Simulating spectroscopy experiments with a superconducting quantum computer},
  author = {Stenger, John P. T. and Ben-Shach, Gilad and Pekker, David and Bronn, Nicholas T.},
  journal = {Phys. Rev. Res.},
  volume = {4},
  issue = {4},
  pages = {043106},
  numpages = {10},
  year = {2022},
  month = {Nov},
  publisher = {American Physical Society},
  doi = {10.1103/PhysRevResearch.4.043106},
  url = {https://link.aps.org/doi/10.1103/PhysRevResearch.4.043106}
}

@article{Poulin2018,
    author = "Poulin, David and Kitaev, Alexei and Steiger, Damian S. and Hastings, Matthew B. and Troyer, Matthias",
    title = "{Quantum Algorithm for Spectral Measurement with a Lower Gate Count}",
    doi = "10.1103/PhysRevLett.121.010501",
    journal = "Phys. Rev. Lett.",
    volume = "121",
    number = "1",
    pages = "010501",
    year = "2018"
}

@article{Lee2022,
    author = "Lee, Woo-Ram and Scott, Ryan and Scarola, V. W.",
    title = "{Hybrid quantum-gap-estimation algorithm using a filtered time series}",
    eprint = "2212.14039",
    archivePrefix = "arXiv",
    primaryClass = "quant-ph",
    doi = "10.1103/PhysRevA.109.052403",
    journal = "Phys. Rev. A",
    volume = "109",
    number = "5",
    pages = "052403",
    year = "2024"
}

@article{trotter1959product,
  author  = {Trotter, H. F.},
  title   = {On the Product of Semi-Groups of Operators},
  journal = {Proceedings of the American Mathematical Society},
  volume  = {10},
  number  = {4},
  pages   = {545--551},
  year    = {1959},
  month   = {aug},
  doi     = {10.1090/S0002-9939-1959-0108732-6},
  url     = {https://www.jstor.org/stable/2033649},
  publisher = {American Mathematical Society}
}

@article{Bruschi2024,
    author = "Bruschi, Matteo and Gallina, Federico and Fresch, Barbara",
    title = "{A Quantum Algorithm from Response Theory: Digital Quantum Simulation of Two-Dimensional Electronic Spectroscopy}",
    doi = "10.1021/acs.jpclett.3c03499",
    journal = "J. Phys. Chem. Lett.",
    volume = "15",
    number = "5",
    pages = "1484--1492",
    year = "2024"
}

@article{Lee2024,
    author = "Lee, Woo-Ram and Myers, Nathan M. and Scarola, V. W.",
    title = "{Noise-resilient and resource-efficient hybrid algorithm for robust quantum gap estimation}",
    eprint = "2405.10306",
    archivePrefix = "arXiv",
    primaryClass = "quant-ph",
    doi = "10.1103/PhysRevA.111.022421",
    journal = "Phys. Rev. A",
    volume = "111",
    number = "2",
    pages = "022421",
    year = "2025"
}

@article{VilchezEstevez2025,
    author = "Vilchez-Estevez, Lucia and Santos, Raul A. and Wang, Sabrina and Gambetta, Filippo Maria",
    title = "{Extracting the spin excitation spectrum of a fermionic system using a quantum processor}",
    eprint = "2501.04649",
    archivePrefix = "arXiv",
    primaryClass = "quant-ph",
    month = "1",
    year = "2025"
}

@article{Weisse2006,
  author  = {Wei{\ss}e, Alexander and Wellein, Gerhard and Alvermann, Andreas and Fehske, Holger},
  title   = {The Kernel Polynomial Method},
  journal = {Rev. Mod. Phys.},
  volume  = {78},
  pages   = {275--306},
  year    = {2006},
  doi     = {10.1103/RevModPhys.78.275},
  eprint  = {cond-mat/0504627},
  archivePrefix = {arXiv}
}

@article{Parrish2019,
  author  = {Parrish, Robert M. and McMahon, Peter L.},
  title   = {Quantum Filter Diagonalization: Quantum Eigendecomposition without Full Quantum Phase Estimation},
  year    = {2019},
  eprint  = {1909.08925},
  archivePrefix = {arXiv},
  primaryClass = {quant-ph}
}

@article{Shen2022,
  author  = {Shen, Yizhi and Klymko, Katherine and Sud, James and Williams-Young, David B. and de Jong, Wibe A. and Tubman, Norm M.},
  title   = {Real-Time Krylov Theory for Quantum Computing Algorithms},
  journal = {Quantum},
  volume  = {7},
  pages   = {1066},
  year    = {2023},
  doi     = {10.22331/q-2023-07-25-1066},
  eprint  = {2208.01063},
  archivePrefix = {arXiv},
  primaryClass = {quant-ph}
}

@article{Summer2023,
  author  = {Summer, Alessandro and Chiaracane, Cecilia and Mitchison, Mark T. and Goold, John},
  title   = {Calculating the Many-Body Density of States on a Digital Quantum Computer},
  journal = {Phys. Rev. Research},
  volume  = {6},
  pages   = {013106},
  year    = {2024},
  doi     = {10.1103/PhysRevResearch.6.013106},
  eprint  = {2303.13476},
  archivePrefix = {arXiv},
  primaryClass = {quant-ph}
}

@article{Rico2018,
title = {SO(3) “Nuclear Physics” with ultracold Gases},
journal = {Annals of Physics},
volume = {393},
pages = {466-483},
year = {2018},
issn = {0003-4916},
doi = {https://doi.org/10.1016/j.aop.2018.03.020},
url = {https://www.sciencedirect.com/science/article/pii/S0003491618300757},
author = {E. Rico and M. Dalmonte and P. Zoller and D. Banerjee and M. Bögli and P. Stebler and U.-J. Wiese}
}

@article{Xu2006,
  author = {Xu, Jun and Ma, H.-R.},
  title = {Density of States of a Two-Dimensional {XY} Model from the Wang-Landau Algorithm},
  journal = {Physical Review E},
  volume = {75},
  number = {4},
  pages = {041115},
  year = {2007},
  doi = {10.1103/PhysRevE.75.041115},
  eprint = {cond-mat/0611039},
  archivePrefix = {arXiv}
}

@article{keating2015spectra,
  author  = {Keating, J. P. and Linden, N. and Wells, H. J.},
  title   = {Spectra and Eigenstates of Spin Chain Hamiltonians},
  journal = {Communications in Mathematical Physics},
  volume  = {338},
  number  = {1},
  pages   = {81--102},
  year    = {2015},
  doi     = {10.1007/s00220-015-2366-0},
  eprint  = {1403.1121},
  archivePrefix = {arXiv}
}

@article{mon1975statistical,
  author  = {Mon, K. K. and French, J. B.},
  title   = {Statistical Properties of Many-Particle Spectra},
  journal = {Annals of Physics},
  volume  = {95},
  number  = {1},
  pages   = {90--111},
  year    = {1975}
}

@article{benet2003review,
  author  = {Benet, L. and Weidenm{\"u}ller, H. A.},
  title   = {Review of the $k$-Body Embedded Ensembles of {G}aussian Random Matrices},
  journal = {Journal of Physics A: Mathematical and General},
  volume  = {36},
  number  = {12},
  pages   = {3569--3594},
  year    = {2003},
  doi     = {10.1088/0305-4470/36/12/340},
  eprint  = {cond-mat/0207656},
  archivePrefix = {arXiv}
}

@book{kota2014embedded,
  author    = {Kota, V. K. B.},
  title     = {Embedded Random Matrix Ensembles in Quantum Physics},
  series    = {Lecture Notes in Physics},
  volume    = {884},
  publisher = {Springer},
  address   = {Heidelberg},
  year      = {2014}
}

@article{kempe2006complexity,
  author  = {Kempe, Julia and Kitaev, Alexei and Regev, Oded},
  title   = {The Complexity of the Local {H}amiltonian Problem},
  journal = {SIAM Journal on Computing},
  volume  = {35},
  number  = {5},
  pages   = {1070--1097},
  year    = {2006},
  doi     = {10.1137/S0097539704445226}
}

@book{macwilliams1977theory,
  author    = {MacWilliams, F. J. and Sloane, N. J. A.},
  title     = {The Theory of Error-Correcting Codes},
  publisher = {North-Holland},
  address   = {Amsterdam},
  year      = {1977}
}

@article{Sinha2009,
  author = {Sinha, Suman and Roy, Soumen Kumar},
  title = {Performance of Wang-Landau Algorithm in Continuous Spin Models and a Case Study: Modified {XY}-Model},
  journal = {Physics Letters A},
  volume = {373},
  number = {3},
  pages = {308--314},
  year = {2009},
  doi = {10.1016/j.physleta.2008.11.034},
  eprint = {0711.1031},
  archivePrefix = {arXiv},
  primaryClass = {cond-mat.stat-mech}
}

@article{Langfeld2012,
  author = {Langfeld, Kurt and Lucini, Biagio and Rago, Antonio},
  title = {The Density of States in Gauge Theories},
  journal = {Physical Review Letters},
  volume = {109},
  number = {11},
  pages = {111601},
  year = {2012},
  doi = {10.1103/PhysRevLett.109.111601},
  eprint = {1204.3243},
  archivePrefix = {arXiv},
  primaryClass = {hep-lat}
}

@article{Chamberland2017,
    author = "Chamberland, Christopher and Iyer, Pavithran and Poulin, David",
    title = "{Fault-tolerant quantum computing in the Pauli or Clifford frame with slow error diagnostics}",
    eprint = "1704.06662",
    archivePrefix = "arXiv",
    primaryClass = "quant-ph",
    doi = "10.22331/q-2018-01-04-43",
    journal = "Quantum",
    volume = "2",
    pages = "43",
    year = "2018"
}

@article{Cardy1985,
  author  = {Cardy, John L.},
  title   = {Universal amplitudes in finite-size scaling: generalisation to arbitrary dimensionality},
  journal = {J. Phys. A: Math. Gen.},
  volume  = {18},
  number  = {13},
  pages   = {L757},
  year    = {1985},
  doi     = {10.1088/0305-4470/18/13/005},
}

@article{Zhu2023,
  author  = {Zhu, W. and Han, Chao and Huffman, Emilie and Hofmann, Johannes S. and He, Yin-Chen},
  title   = {Uncovering Conformal Symmetry in the {3D} {I}sing Transition: State-Operator Correspondence from a Quantum Fuzzy Sphere Regularization},
  journal = {Phys. Rev. X},
  volume  = {13},
  number  = {2},
  pages   = {021009},
  year    = {2023},
  doi     = {10.1103/PhysRevX.13.021009},
  eprint  = {2210.13482},
  archivePrefix = {arXiv},
  primaryClass = {cond-mat.stat-mech},
}

@article{SUZUKI1990319,
title = {Fractal decomposition of exponential operators with applications to many-body theories and Monte Carlo simulations},
journal = {Physics Letters A},
volume = {146},
number = {6},
pages = {319-323},
year = {1990},
issn = {0375-9601},
doi = {https://doi.org/10.1016/0375-9601(90)90962-N},
url = {https://www.sciencedirect.com/science/article/pii/037596019090962N},
author = {Masuo Suzuki}
}

@article{Schmitz2023,
    author = "Schmitz, Albert T. and Sawaya, Nicolas P. D. and Johri, Sonika and Matsuura, A. Y.",
    title = "{Graph optimization perspective for low-depth Trotter-Suzuki decomposition}",
    eprint = "2103.08602",
    archivePrefix = "arXiv",
    primaryClass = "quant-ph",
    doi = "10.1103/PhysRevA.109.042418",
    journal = "Phys. Rev. A",
    volume = "109",
    number = "4",
    pages = "042418",
    year = "2024"
}

@inproceedings{Paykin2023,
    author = "Paykin, Jennifer and Schmitz, Albert T. and Ibrahim, Mohannad and Wu, Xin-Chuan and Matsuura, A. Y.",
    title = "{PCOAST: A Pauli-Based Quantum Circuit Optimization Framework}",
    eprint = "2305.10966",
    archivePrefix = "arXiv",
    primaryClass = "quant-ph",
    booktitle = "{2023 IEEE International Conference on Quantum Computing and Engineering (QCE)}",
    pages = "715--726",
    doi = "10.1109/QCE57702.2023.00087",
    year = "2023"
}

@inproceedings{Liu2024,
    author = "Liu, Ji and Gonzales, Alvin and Huang, Benchen and Saleem, Zain Hamid and Hovland, Paul",
    title = "{QuCLEAR: Clifford Extraction and Absorption for Quantum Circuit Optimization}",
    eprint = "2408.13316",
    archivePrefix = "arXiv",
    primaryClass = "quant-ph",
    booktitle = "{2025 IEEE International Symposium on High Performance Computer Architecture (HPCA)}",
    pages = "158--172",
    doi = "10.1109/HPCA61900.2025.00023",
    year = "2025"
}

@article{VanGoffrier2024,
       author = {{Van Goffrier}, Graham and {Banerjee}, Debasish and {Chakraborty}, Bipasha and {Huffman}, Emilie and {Maiti}, Sandip},
        title = "{Towards the phase diagram of fermions coupled with $SO(3)$ quantum links in $(2+1)$-D}",
      journal = {arXiv e-prints},
         year = 2024,
        month = dec,
          eid = {arXiv:2412.09691},
        pages = {arXiv:2412.09691},
          doi = {10.48550/arXiv.2412.09691},
archivePrefix = {arXiv},
       eprint = {2412.09691},
 primaryClass = {hep-lat},
       adsurl = {https://ui.adsabs.harvard.edu/abs/2024arXiv241209691V}
}
\newpage
\appendix
 \section{Other quantum spectral estimation strategies} \label{sec:otherSpecStrageies}
 The quantum eigenvalue estimation problem (QEEP) introduced in Ref.~\cite{Somma2019} 
provided a more careful formulation of the same task. Using overlapping bins and smooth
window functions are able to reduce the long-time cutoff needed to control truncation errors
and avoid other instabilities. The result is a finite-resolution spectral distribution with 
rigorous error and confidence guarantees, obtained from one-ancilla time-series measurements. 
Alternative classical post-processing can help further when the signal contains only a small 
number of well-separated frequencies. 

 The same time-domain structure appears in algorithms which target spectral gaps or response 
functions rather than the density of states. In probe-qubit spectroscopy, an auxiliary 
two-level system is weakly coupled to the model and its transition probability is measured 
while the probe frequency is scanned \cite{Stenger2022}. Resonances identify energy differences 
which are connected by the coupling operator. Quench protocols instead prepare a perturbed state 
and reconstruct retarded correlation functions from the subsequent dynamics; recent implementations 
have extracted excitation spectra of interacting fermionic systems on quantum processors 
\cite{VilchezEstevez2025}. Filtered time-series methods can focus directly on a spectral gap 
and can be more resilient to state-preparation and hardware noise than a reconstruction of 
the entire spectrum \cite{Lee2022,Lee2024}. Another method \cite{Bruschi2024} takes inspiration 
from nonlinear optical response theory, computing dipole-operator response functions from a 
Hadamard test on a single-controlled time evolution. These methods are physically appealing because 
they reproduce the logic of an experiment, but the observed lines are weighted by transition 
matrix elements and by the choice of initial state. A missing line need not imply the absence 
of an eigenstate; it may simply be dark to the chosen probe.

 Moment methods provide another route to a smooth density of states. After shifting and rescaling 
the Hamiltonian so that the spectrum of $\widetilde H$ lies in $[-1,1]$, the normalised density 
can be expanded in Chebyshev polynomials with coefficients,
\begin{equation} \label{eq:ChebyshevMom}
    \mu_m = \frac{1}{D} \Tr\!\left[T_m(\widetilde H)\right].
\end{equation}
 Classically, the kernel polynomial method evaluates these moments recursively and damps the 
 truncated expansion to suppress Gibbs oscillations \cite{Weisse2006}. Quantum versions estimate 
 the moments using controlled evolutions together with stochastic trace evaluation, and have been 
 demonstrated for many-body densities of states on trapped-ion hardware \cite{Summer2023}. 
 The moment and time-domain pictures are closely related: both replace explicit diagonalization 
 by the estimation of a compressed representation of the spectrum. Their practical difference 
 lies in the basis used for the reconstruction, the required circuit primitives and the manner 
 in which finite resolution is regularised.

\section{Pauli-Frame Optimization for Shallow Trotter Step}\label{sec:pauliframe}

A Pauli frame is, for our purposes, a bookkeeping method for how Clifford operations conjugate Pauli operators. It is widely used in fault-tolerant quantum computation, where Pauli/Clifford corrections are propagated forward in time rather than applied immediately, thereby mitigating latency and measurement delays \cite{Chamberland2017}. The same underlying mechanism serves a different purpose for us as well as for other authors \cite{Schmitz2023}, in that we are focused on the reordering of one large Clifford subcircuit, and where possible the partial factoring of that circuit into classical post-processing \cite{Paykin2023}. We note that the application of Pauli frames to gauge theory Trotter steps appears to be absent from the literature, and while we plan to repair this more broadly in an upcoming work, in this Appendix the intention is only to improve the depth of the controlled-evolution block of the quantum spectral sampling algorithm.

For clarity of exposition, we follow \cite{Schmitz2023} in first working in Pauli space modulo overall phase.  In this representation, $n$-qubit Pauli operators are identified with binary symplectic vectors in $\mathbb{F}_2^{2n}$, and conjugation by any Clifford unitary acts linearly on this space, i.e. as the symplectic automorphisms $\mathrm{Sp}(2n,\mathbb{F}_2)$ which preserve commutation relations. As discussed in \cite{Schmitz2023}, incorporating the sign bit directly into this vector space destroys non-degeneracy of the bilinear form and breaks strict linearity for certain two-qubit Cliffords.  It is therefore most straightforward to separate support updates from phase updates, which are restored only after the desired Clifford action has been constructed and optimized.

\begin{definition}[Pauli frame]
Let $\mathcal{P}_n$ denote the $n$-qubit Pauli group and let $\overline{\mathcal{P}}_n = \mathcal{P}_n / \{\pm 1, \pm i\}$ denote Pauli operators modulo global phase.  
A \emph{Pauli frame} is a classical record $F \in \overline{\mathcal{P}}_n$ such that the physical state satisfies
\begin{equation}
\rho_{\mathrm{phys}} = F \rho_{\mathrm{ideal}} F^\dagger .
\end{equation}

It is often convenient to represent a Clifford-induced frame update by listing the images of the single-qubit Pauli generators in two columns,
\begin{equation}
\begin{pmatrix}
U^\dagger Z_1 U & U^\dagger X_1 U \\
U^\dagger Z_2 U & U^\dagger X_2 U \\
\vdots & \vdots \\
U^\dagger Z_n U & U^\dagger X_n U
\end{pmatrix},
\end{equation}
where each entry is an $n$-qubit Pauli operator written as a Pauli string.  

Upon expressing each Pauli operator modulo phase as a binary symplectic vector $(\mathbf{x}\mid\mathbf{z}) \in \mathbb{F}_2^{2n}$, this two-column display corresponds to a $2n \times 2n$ binary matrix $S_U$ representing conjugation by $U$ on phase-free Pauli space.  For Clifford $U$, the matrix $S_U$ is symplectic and satisfies
\begin{equation}
S_U^T \Omega S_U = \Omega ,
\end{equation}
where $\Omega$ is the standard symplectic form on $\mathbb{F}_2^{2n}$.
\end{definition}

\begin{theorem}[Frame propagation under Clifford gates]
\label{FramePropagationThm}
Let $U$ be a Clifford unitary and suppose
\begin{equation}
\ket{\psi_{\mathrm{phys}}}=P\ket{\psi_{\mathrm{ideal}}},\qquad P\in\overline{\mathcal P}_n .
\end{equation}
Then
\begin{equation}
U\ket{\psi_{\mathrm{phys}}}
=
P' U\ket{\psi_{\mathrm{ideal}}},
\qquad
P' = UPU^\dagger \in \overline{\mathcal P}_n .
\end{equation}
\end{theorem}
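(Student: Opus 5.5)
The plan is a direct computation: insert the identity $U^\dagger U=\mathds{1}$ between $P$ and $\ket{\psi_{\mathrm{ideal}}}$, then use the defining property of a Clifford unitary to show that $UPU^\dagger$ is again a Pauli operator. Starting from $\ket{\psi_{\mathrm{phys}}}=P\ket{\psi_{\mathrm{ideal}}}$, we write
\begin{equation}
U\ket{\psi_{\mathrm{phys}}}=UP\ket{\psi_{\mathrm{ideal}}}=UPU^\dagger\,U\ket{\psi_{\mathrm{ideal}}},
\end{equation}
so that, setting $P'=UPU^\dagger$, the displayed identity holds at the level of operators. This step is purely algebraic and holds for any unitary $U$. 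The content of the theorem is therefore entirely in the membership claim $P'\in\overline{\mathcal P}_n$.

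For the membership claim, I would use that the Clifford group is by definition the normaliser of $\mathcal P_n$ in $U(2^n)$, so $UQU^\dagger\in\mathcal P_n$ for every $Q\in\mathcal P_n$. If one prefers to argue from generators, it suffices to check the images of $Z_j$ and $X_j$. Any Pauli string is, up to phase, a product $\prod_j X_j^{x_j}Z_j^{z_j}$, and conjugation is multiplicative:
\begin{equation}
U\Bigl(\prod_j X_j^{x_j}Z_j^{z_j}\Bigr)U^\dagger=\prod_j (UX_jU^\dagger)^{x_j}(UZ_jU^\dagger)^{z_j}.
\end{equation}
A product of Pauli operators is again a Pauli operator up to a phase in $\{\pm1,\pm i\}$. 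Consequently $P'$ is a well-defined element of $\overline{\mathcal P}_n$, and its symplectic vector is obtained by applying $S_U$ (or $S_{U^\dagger}$, depending on the convention in the Definition) to that of $P$.

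The one point needing care, and the main (mild) obstacle, is the quotient by phases. Since $P$ is only specified modulo $\{\pm1,\pm i\}$, the statement $\ket{\psi_{\mathrm{phys}}}=P\ket{\psi_{\mathrm{ideal}}}$ holds only up to a global phase of the state. I would handle this as follows:
\begin{itemize}
\item Note that the map $P\mapsto UPU^\dagger$ sends phase multiples to phase multiples, so it descends to a well-defined map on $\overline{\mathcal P}_n$.
\item Conclude that the identity $U\ket{\psi_{\mathrm{phys}}}=P'U\ket{\psi_{\mathrm{ideal}}}$ holds with the same global-phase ambiguity as the hypothesis.
\item Observe that this ambiguity is invisible at the density-matrix level $\rho_{\mathrm{phys}}=F\rho_{\mathrm{ideal}}F^\dagger$ used in the Definition.
\end{itemize}
This is consistent with the Appendix's choice to track supports separately from sign bits. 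Finally, I would remark that $S_U^T\Omega S_U=\Omega$ guarantees that the propagated frame preserves commutation relations. That property is not needed for the theorem itself, but it justifies linear bookkeeping of frames in the subsequent compilation.
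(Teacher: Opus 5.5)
Your proposal is correct and follows essentially the same route as the paper: insert $U^\dagger U$ to get $UP\ket{\psi_{\mathrm{ideal}}}=(UPU^\dagger)U\ket{\psi_{\mathrm{ideal}}}$, then use that the Clifford group normalises the Pauli group, so $P'\in\overline{\mathcal P}_n$. Your extra remarks on phase well-definedness are sound, and your hedge between $S_U$ and $S_{U^\dagger}$ is warranted: the Definition builds $S_U$ from $U^\dagger Z_j U$, whereas the frame update is $P\mapsto UPU^\dagger$.
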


\begin{proof}
Because the Clifford group normalizes the Pauli group,
$UPU^\dagger$ is again a Pauli operator modulo phase. Hence
\begin{equation}
UP\ket{\psi_{\mathrm{ideal}}}
=
(UPU^\dagger)U\ket{\psi_{\mathrm{ideal}}},
\end{equation}
so Clifford evolution updates the Pauli frame deterministically by conjugation.
In the binary symplectic representation, this is precisely the action of the
symplectic matrix $S_U$ associated with $U$.
\end{proof}

The Clifford group is generated by the single–qubit gates $H$ and $S$ together with the two–qubit gate $\mathrm{CNOT}$.  It is therefore sufficient to understand how these three gates transform Pauli operators under conjugation.  Since any Clifford circuit can be decomposed into this generating set, the induced action of an arbitrary Clifford on Pauli strings follows directly from the rules derived below.
The Hadamard gate exchanges the $X$-columns and $Z$-columns on the qubit to which it is applied, i.e.
\begin{equation}
Z \mapsto H^\dagger Z H = X,~X \mapsto H^\dagger X H = Z,
\end{equation}
while the phase gate performs a $\pi/2$ rotation around the $Z$ axis, i.e. up to a sign,
\begin{equation}
Z \mapsto S^\dagger Z S = Z,~X \mapsto S^\dagger X S = Y,~Y \mapsto S^\dagger Y S = X.
\end{equation}
Conjugation by $\mathrm{CNOT}$ has the effect that the $X$-column operator on the control qubits is multiplied onto that of the target qubit, while the $Z$-column operator on the target is multiplied onto that of the control.  Explicitly,
\begin{align}
X_c &\mapsto \mathrm{CNOT}^\dagger (X_c) \mathrm{CNOT} = X_c X_t, \\
Z_t &\mapsto \mathrm{CNOT}^\dagger (Z_t) \mathrm{CNOT} = Z_c Z_t.
\end{align}

In the Pauli frame representation introduced above, a Clifford unitary $U$ is specified by the images of the generators $Z_j$ and $X_j$. For a given gate sequence this image may be computed from repeated application of the above conjugations.

After the phase-free update rules are understood, we restore signs by fixing a $Y$ convention (e.g.\ $Y=-iZX$) and tracking a phase bit in the tableau. In our circuit derivations, we perform this sign restoration post hoc.

As a toy model preceding the full gauge/QLM construction, we consider how to construct a frame which contains two four-body Pauli terms of interest (e.g.\ $Z_0 Z_1 Z_2 Z_3$ and $X_0 X_1 Z_2 Z_3$), illustrating a compact structure which will prove useful for our full Hamiltonian. A similar term appears in \cite{Schmitz2023}, but without the multi-operator parallelization which we introduce.
{\footnotesize
\setlength{\arraycolsep}{0pt}
\begin{align*}
&\begin{pmatrix}
Z_0 & X_0 \\
Z_1 & X_1 \\
Z_2 & X_2 \\
Z_3 & X_3
\end{pmatrix}
\overset{C^0_1}{\rightarrow}
\begin{pmatrix}
Z_0 & X_0X_1 \\
Z_0Z_1 & X_1 \\
Z_2 & X_2 \\
Z_3 & X_3
\end{pmatrix}
\overset{C^2_3}{\rightarrow}
\begin{pmatrix}
Z_0 & X_0X_1 \\
Z_0Z_1 & X_1 \\
Z_2 & X_2X_3 \\
Z_2Z_3 & X_3
\end{pmatrix}
\overset{H^0}{\rightarrow}
\\[4pt]
&\begin{pmatrix}
X_0X_1 & Z_0 \\
Z_0Z_1 & X_1 \\
Z_2 & X_2X_3 \\
Z_2Z_3 & X_3
\end{pmatrix}
\overset{C^3_0}{\rightarrow}
\begin{pmatrix}
X_0X_1Z_2Z_3 & Z_0 \\
Z_0Z_1 & X_1 \\
Z_2 & X_2X_3 \\
Z_2Z_3 & Z_0X_3
\end{pmatrix}
\overset{C^3_1}{\rightarrow}
\begin{pmatrix}
X_0X_1Z_2Z_3 & Z_0 \\
Z_0Z_1Z_2Z_3 & X_1 \\
Z_2 & X_2X_3 \\
Z_2Z_3 & Z_0X_1X_3
\end{pmatrix}.
\end{align*}
}

A useful perspective is provided by the Pauli Frame Graph \cite{Schmitz2023}, in which nodes represent Pauli frames and edges correspond to elementary Clifford transformations. Circuit synthesis may then be regarded as a walk through this graph, with the cost of moving between frames determined by the Clifford operations required to expose the next Pauli rotation. 

This viewpoint is illustrated schematically in Fig.~\ref{fig:hamming_comparison}. Standard compute-uncompute synthesis treats each Pauli rotation independently, returning to a reference frame before proceeding to the next term. By contrast, a Pauli-frame construction retains the intermediate Clifford frame and moves directly between successive targets. The resulting circuit depth is therefore controlled not only by the individual Pauli weights, but also by the \textit{distance} between consecutive frame configurations. We will see that this permits several multi-qubit rotations to share much of the same Clifford structure rather than repeatedly reconstructing it.

The remaining problem is then to determine which sequence of Pauli targets, and which sequence of Clifford frame transitions between them, yields the shallowest circuit. The automated procedure used to perform that optimization is described in Appendix~\ref{sec:pauliframecompilation}.

\begin{figure}[t!]
    \centering
    \includegraphics[width=0.5\textwidth]{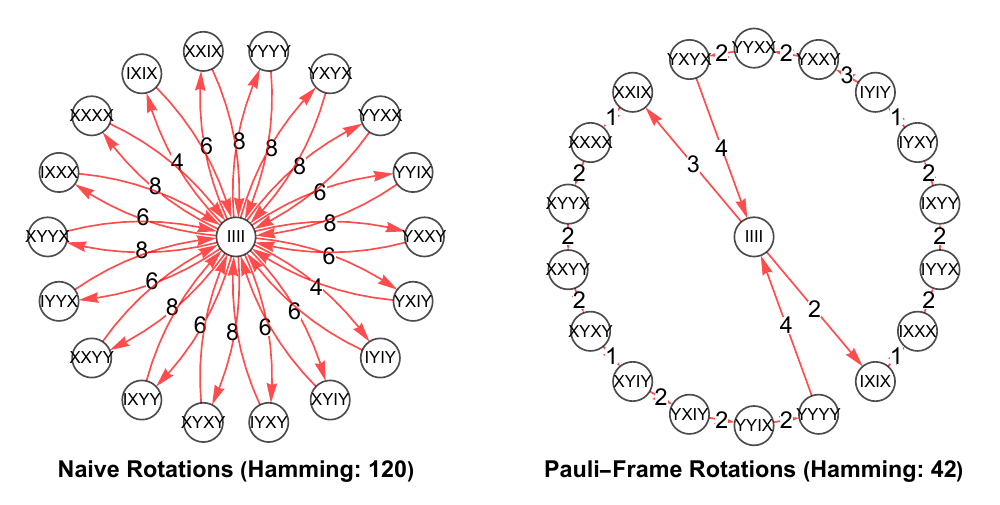}
    \caption{Graphical efficiency comparison between standard compute-uncompute circuitization (left) of a multi-term Trotter step, and two parallel walks of near-optimal Hamming distance through the terms.}
    \label{fig:hamming_comparison}
\end{figure}

\begin{figure*}[t!]
    \centering
    \includegraphics[width=0.95\textwidth]{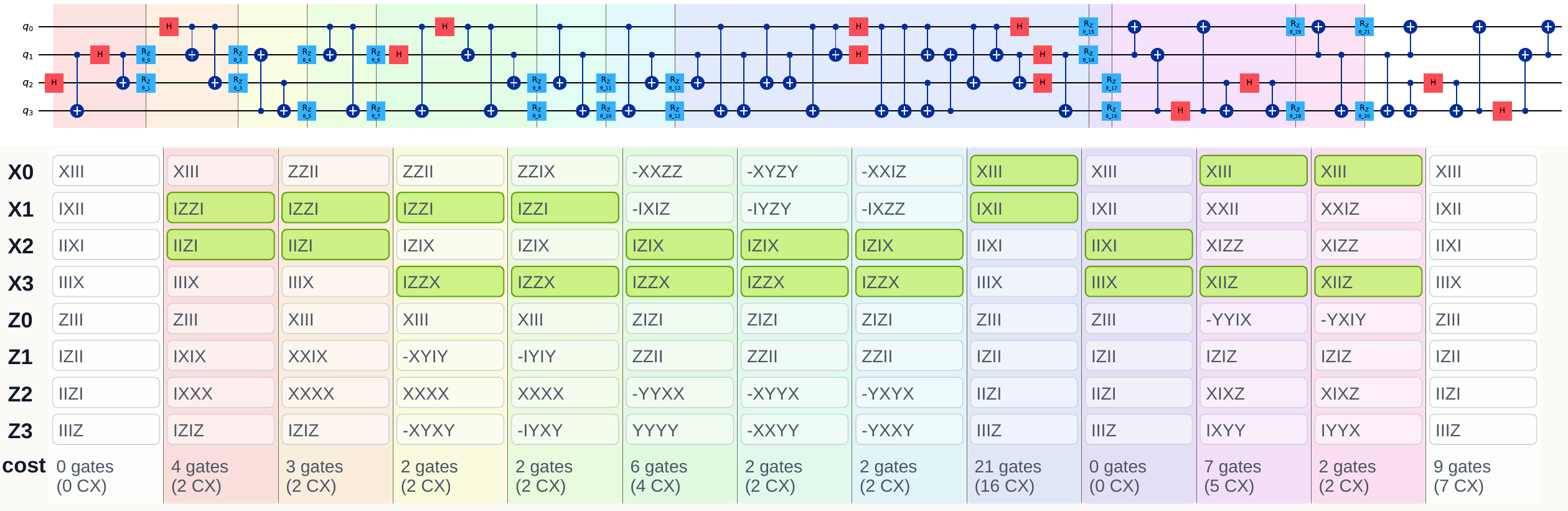}
    \caption{Pauli-frame-optimised Trotter step circuit for the $L=2$ $SO(3)$ QLM Hamiltonian, realising the two parallel computation paths of Figure~\ref{fig:hamming_comparison} (right).}
    \label{fig:trotter_circuit}
\end{figure*}

\section{Automatic Pauli-Frame Circuit Compilation via Recursive Branch-and-Bound Search}\label{sec:pauliframecompilation}

The prior section has clarified what we specifically mean by a multi-target Pauli-frame construction for an $n$-qubit Trotter step. However, so far we have relied upon manual heuristics to identify short circuits which efficiently progress from one multi-target Pauli-frame to the next. This has been remarkably successful for two-targets, enabling us to reduce our circuit depth by a large factor; but is of course not an extendable strategy for $M$-targets, for Hamiltonians with a greater number of terms. More relevantly, manual heuristics are slow, much too slow for the next improvement we would like to include: an enumeration over all possible orderings of Hamiltonian terms in our Trotter step.

A moonshot goal is to (recursively) explore all possible orderings of terms, circuitize them via Pauli-frame compilation, and in so doing, select an ordering/circuitization which minimises both Trotter error and circuit depth. Immediately we can realise that it is almost certainly impossible to minimise both quantities; there will be some Pareto frontier against which the tradeoff between depth and Trotter error is selected. 

Rather than thoroughly explore this frontier, we will factor the problem by first choosing a good-enough improvement of Trotter error, based on a commuting block decomposition. Then within the family of orderings constrained by commuting blocks, we will perform a recursive branch-and-bound search to achieve minimal Pauli-frame circuit depth. Related decompositions into commuting Pauli blocks appear, for example, in QuCLEAR \cite{Liu2024}.

The implementation used here is not a fully general Clifford synthesis algorithm. It is a structured search procedure specialised to Pauli product-formula circuits. Its inputs are $(1)$ a list of real-coefficient Pauli evolution terms and their commuting block structure, and $(2)$ a bounded search policy for finding short Clifford transitions between Pauli frames.

The output is a quantum circuit of the schematic form $U_{\mathrm{PF}}(t) = C_0 D_0(t) C_1 D_1(t) \cdots C_{B-1}D_{B-1}(t) C_{\mathrm{unwind}}$, where each $C_b$ is a time-independent Clifford transition between Pauli frames, each $D_b(t)$ is a diagonal layer of one or two $R_z$ rotations, and $C_{\mathrm{unwind}}$ returns the final Pauli frame to the initial, computational frame.

The Hamiltonian is first expanded as an ordered list of Pauli terms, which we then partition into mutually commuting families by means of a greedy clique-cover procedure; this of course is an upfront step which is unchanged by the model parameters of the Hamiltonian, although it could be further optimised in symmetric regimes where some terms decouple. From this point on, we restrict our attention to orderings which preserve the chosen commuting-family structure. The motivation is that ordering terms within commuting families should not directly increase first-order Trotter error. Nevertheless, a nontrivial circuit optimization problem remains even with this dramatic reduction in search space.

Within each commuting family, terms are grouped into ordered two-term blocks wherever possible. For a family of even size $2m$, we define a \textit{candidate} as an ordered perfect matching $\bigl((i_1,j_1),(i_2,j_2),\ldots,(i_m,j_m)\bigr)$. Putting aside for now how we will select such candidates, we will first specify how a Pauli-frame circuitization is performed for any such candidate.

The allowed elementary Clifford operations are, as usual,
\begin{equation}
H_q,\qquad S_q,\qquad S_q^\dagger,\qquad \mathrm{CNOT}_{q\to r},
\end{equation}
although in our experiments with the $SO(3)$ QLM, we found it beneficial to disable explicit phase gates in the search.

For a one-term block, a deterministic construction is possible. One chooses a pivot qubit from the support of the Pauli string, applies local basis changes $X \mapsto Z \quad \text{using } H$, $Y \mapsto Z \quad \text{using } S^\dagger H$, then applies a fixed CNOT fan-in across the support so that a single $Z$-image equals the target Pauli string.

For a two-term block, if the two Pauli strings happen to commute and enjoy disjoint support, two one-term constructions can be combined and assigned to two distinct pivots.

However, overlapping commuting pairs must be handled by a more expensive bounded-frame search over the graph of Pauli frames generated by elementary Clifford gates. We chose a per-gate weighted cost
\begin{equation}
    \mathrm{cost}(H)=1,
\qquad
\mathrm{cost}(S)=1,
\qquad
\mathrm{cost}(\mathrm{CNOT})=2,
\end{equation}
which is easy to modify based on hardware properties. Two search backends were implemented:

\begin{enumerate}
\item \textbf{Targeted search.} Run a bounded Dijkstra search from the current frame until a frame satisfying the two-Pauli diagonalization constraint is found.
\item \textbf{Reachable-ball search.} Precompute all frames reachable from the current frame within the bound, then scan this bounded ball for frames satisfying different pair constraints.
\end{enumerate}

The reachable-ball backend is only advantageous as an amortization policy when many pair constraints are queried from the same source frame; in this case it is very advantageous indeed. One must also be careful in the case of entry/final pair transitions, since we are optimising one commuting-block at a time and do not want to count the transitions which will change when the blocks are connected together.

For each commuting family, the algorithm recursively searches over ordered pairings as follows:

\begin{enumerate}
\item Start from the identity Pauli frame and the full term set of the family.
\item Enumerate possible next pairs among the remaining terms.
\item Compile the chosen pair from the current frame using the relevant entry/internal search policy.
\item If compilation fails within the cost bound, prune all completions of that prefix.
\item If compilation succeeds, recurse on the remaining terms using the block's final frame.
\item When no terms remain, materialise the candidate circuit and compute its CNOT count, Clifford count, and depth.
\end{enumerate}


Pruning refers to several strategies for rejecting candidates based on prior computations. First, if a pair cannot be exposed from the current frame within the active cost bound, then all candidate orderings beginning with that prefix are skipped. Second, if the same summary state has already been reached with less or equal internal prefix cost, the new prefix is discarded. Third, once a complete candidate has been found, any prefix whose internal cost is already strictly worse than the incumbent internal cost is discarded. Equal-cost prefixes are retained, because they may still win the flat-order tie-break.

This local search produces a selected ordering for each commuting family. A second, global stage then searches over permutations of these families. For each permutation, the first pair of a new family is recompiled as a connector from the frame left by the preceding family, and the complete circuit is materialised and compared using its depth, CNOT count, and total Clifford count. The resulting circuit is therefore the best circuit found within the chosen commuting decomposition and bounded search policy, rather than a proof of global optimality over arbitrary Pauli orderings and Clifford circuits.

For the reader interested in deploying this algorithm for their own purposes, or indeed extending it, we report several false trails which we ultimately rejected:


Manual two-target Pauli-frame constructions were effective for small examples, but did not scale to enumerating many pair and family orderings. They were used as motivation but replaced by automated bounded search.

One early policy used reachable-ball search too aggressively for bounded internal queries. This caused large overhead because many source frames were not reused enough to amortise the cost of building a full bounded neighbourhood. The final policy uses reachable-ball only when cached or when the predicted number of pair queries from a source frame is sufficiently large.

Increasing the internal transition bound admits more pair transitions, but can sharply increase search time. In our tuning runs, larger internal bounds increased the number of accepted candidates without improving the best local objective. We therefore used a smaller internal bound and relied on pruning.

Large entry bounds allowed a few very expensive depth-zero searches. These dominated wall time but did not improve the best circuits found in the tuned runs. 

Figure~\ref{fig:trotter_circuit} shows the Pauli-frame-optimised controlled-Trotter step resulting from the procedure outlined here, for the $L=2$ $SO(3)$ QLM.

\section{Details of Disorder Hardware Runs}\label{sec:disorder_runs}

\begin{figure}
\includegraphics[width=0.49\textwidth]{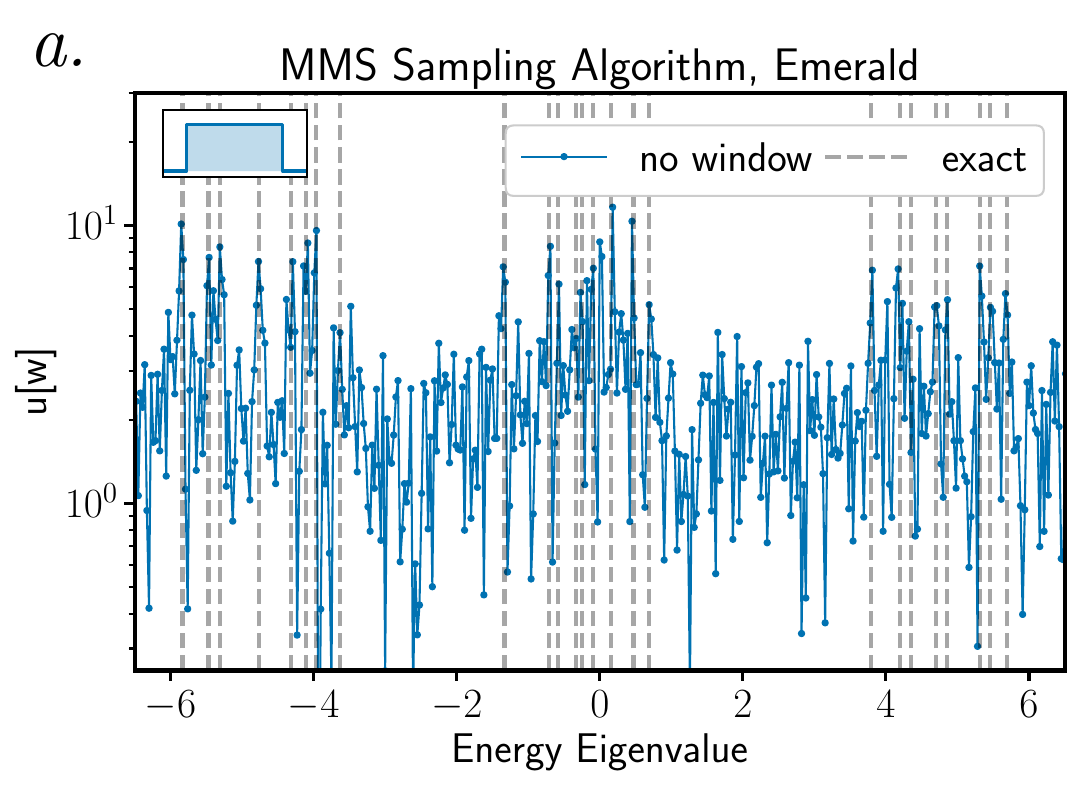}
\includegraphics[width=0.49\textwidth]{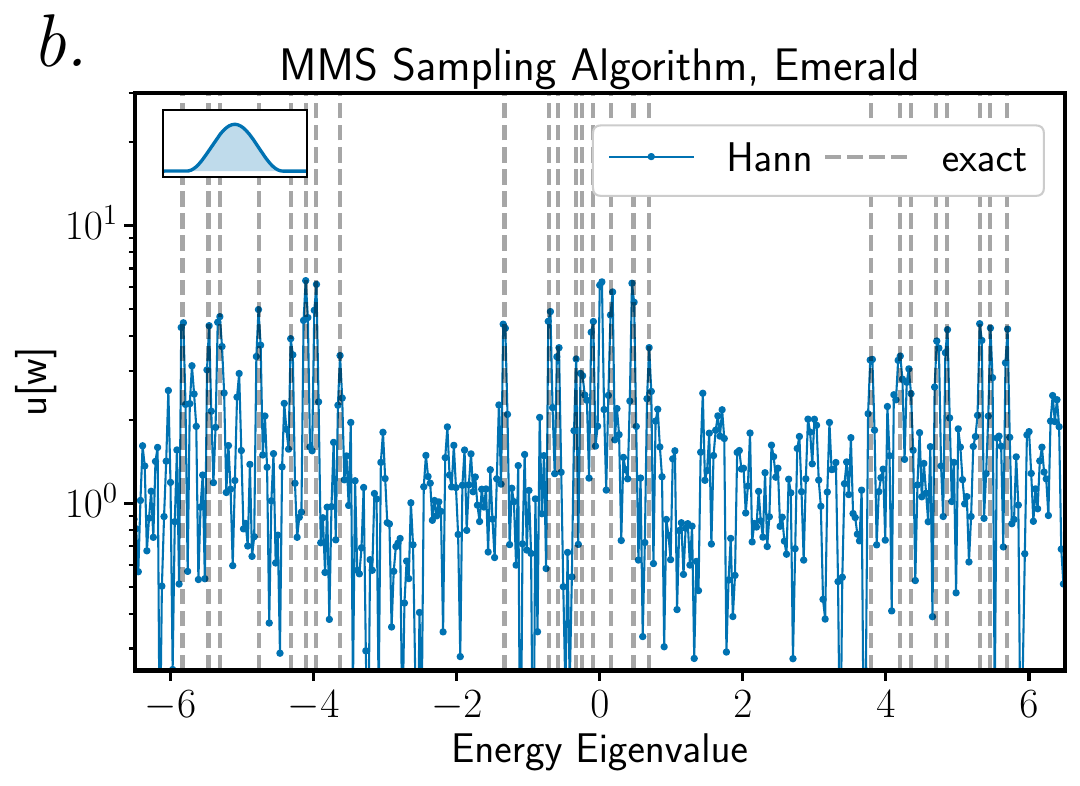}
\caption{The comparison of the data after the Fourier transform (\textit{a}) with no Hann window (rectangular weighting) and (\textit{b}) with a Hann window (raised cosine weighting). The data with the Hann window has broader peaks (widened by a factor of 2), but deeper troughs, making the peaks more obvious.}
\label{window_compare}
\end{figure}

These runs were performed on IQM Emerald over the course of two different calibration sessions. This section contains information about tests that were performed during each session before the runs began, and postprocessing techniques that were performed on the raw outputs from the runs.

At the beginning of each session, the best two-qubit gates are ranked by their fidelity scores. The layout is then chosen so that the controlled-evolution ancilla qubit is connected to four other qubits, each one of the Hamiltonian evolution qubits. Then the next best qubit (according to two-qubit fidelities) that is connected to the other Hamiltonian evolution qubits is chosen to be the fifth Hamiltonian evolution qubits. Finally, five more qubits are selected to act as garbage qubits, each connected to a Hamiltonian evolution qubit.

Based on this layout, the circuit is formed in a way that includes two error mitigation techniques: readout error mitigation and layer-based dynamical decoupling. Because measurements are based only on the readout of the single ancilla qubit which controls time-evolution, readout error mitigation is cheap and based on a $2 \times 2$ confusion matrix formed from the reported calibration for the session. The layer-based dynamical decoupling is applied only to the controlled evolution ancilla qubit and the Hamiltonian evolution qubits, not the garbage qubits. Dynamical decoupling acts every time in the circuit that a qubit is excluded from at least two consecutive two-qubit gates (single-qubit gates do not count in this). For this \textit{idle} qubit, an $X$ gate is applied twice, once $\approx 25\%$ of the time into the idle time, and once $\approx 75\%$ of the time into the idle time (according to the circuit layers), in order to fight the drift of decoherence. The code to produce the dynamical decoupling for these circuits is available on Zenodo.

In postprocessing, we present the Hann window results in the main text because it creates a better contrast in magnitude between the peaks versus the troughs, at the cost of widening the peaks. The Hann obtained is used by multiplying each data point of the trace of the time evolution by
\begin{equation}
w[k] = \tfrac{1}{2}\left(1 - \cos\frac{2\pi k}{N-1}\right),
\end{equation}
where $N$ is the number of timesteps and $k$ is the index of each data point, and then performing the FFT on that data. Figure \ref{window_compare} shows the data for both the standard rectangular window and the Hann window, and illustrates how the Hann window both lowers troughs but also broadens the peaks to twice the width of the rectangular window. For the current resolution, this is an acceptable tradeoff.

\section{Classical Methods for Estimating Densities of States}\label{sec:dos_methods}

The density of states provides a compact description of how the configurations or eigenstates of a gauge theory are distributed over an extensive variable. In the Hamiltonian formulation considered in this work, the density of states $\varrho_H(E)$ defined in Eq.~\eqref{eq:dos} counts energy eigenstates and determines the partition function through Eq.~\eqref{eq:Zfromdos}. From this one may calculate the free energy, entropy, internal energy, and specific heat over a continuous range of temperatures, quantities whose temperature-dependence may also imply phase transitions. 

A related construction may be defined in Euclidean lattice gauge theory. For a theory with fields $\phi$ and statistical weight $\exp[-\beta S[\phi]]$, one defines
\begin{equation}
    \rho(E)
    =
    \int \mathcal{D}\phi\,
    \delta\!\left(S[\phi]-E\right),
    \label{eq:euclidean_dos}
\end{equation}
so that
\begin{equation}
    Z(\beta)
    =
    \int dE\,\rho(E)e^{-\beta E}.
    \label{eq:euclidean_Z_dos}
\end{equation}
Here $E$ labels the value of the Euclidean action rather than a Hamiltonian energy eigenvalue, so this density of states should not be confused with $\varrho_H(E)$ even though both reduce the path integral to a single-variable integral.

Once the probability distribution
\begin{equation}
    P_\beta(E)
    =
    \frac{1}{Z(\beta)}\rho(E)e^{-\beta E}
\end{equation}
is computed, one may examine it for coexisting phases, suppressed regions between peaks, and free-energy barriers near first-order transitions. More generally, free-energy differences can be estimated from ratios between different regions of $\rho(E)$, or equivalently from the logarithm of their relative probabilities.

We note here that conventional importance sampling is designed to explore configurations which are typical at one chosen value of $\beta$. Such methods can consequently perform poorly when relevant configurations are rare, when metastable phases are separated by large barriers, or when the desired quantity depends on the normalization of the partition function rather than only on expectation values within a typical ensemble. Density of states methods instead quantify the relative abundance of configurations across an extended interval of $E$. In theories with a complex measure, this reformulation can improve the overlap problem and restrict the oscillatory contribution to a lower-dimensional integral, although it need not remove the cancellations responsible for the sign problem.

The Wang-Landau algorithm estimates the density of states by replacing canonical sampling with a random walk in energy \cite{Xu2006,Sinha2009}. By random walk, we mean that if the current estimate is $\rho_{\mathrm{est}}(E)$, configurations are sampled with a weight proportional to $1/\rho_{\mathrm{est}}(E)$, so that energies which have so far been assigned a small density are visited more often. In practice, one stores $g(E)=\ln \rho_{\mathrm{est}}(E)$ and updates bins iteratively with a controllable update size $\ln f$, while a histogram of visits is accumulated.

Once the histogram satisfies a chosen flatness criterion, it is reset and the update size is gradually reduced. The algorithm terminates when $\ln f$ is below a prescribed tolerance, at which point the random walk should have approximately compensated for the underlying variation of the density of states. This allows a single simulation to explore a broad energy range, including regions which are rarely visited in a canonical ensemble because their Boltzmann weight is exponentially smaller than that of the thermodynamically dominant energies. Wang-Landau sampling has also been applied to models with discrete energy levels \cite{Sinha2009,Langfeld2016}.

For systems with continuous energies, the density of states must be represented using finite energy bins. This situation arises, for example, for particles with unbound motion, scattering continua and continuum quantum field theories; systems with discrete finite-volume spectra develop an effectively continuous density of states in the thermodynamic limit. Within each bin, all configurations are assigned to the same estimated value of the density of states, even when $\rho(E)$ varies significantly across the interval \cite{Xu2006,Sinha2009}. The problem is most severe near the spectral boundaries, where because the density may change rapidly, the relevant configurations may be visited only rarely. The estimates density of states then becomes very sensitive to both the bin width and the sampling history \cite{Xu2006,Langfeld2012}.

Narrower bins reduce the bias from averaging over a finite interval, but they also increase the number of bins which must be sampled and the time required to visit each of them sufficiently often. This is the natural accuracy-cost tradeoff associated with Wang-Landau calculations \cite{Sinha2009}.

The LLR method extends global Wang-Landau sampling with a local determination of the logarithmic derivative of the density of states \cite{Langfeld2012,Langfeld2016}. The energy range is divided into intervals $[E_k,E_k+\delta E]$, and within each interval one approximates
\begin{equation}
    \ln \rho(E)
    \simeq
    \ln \rho\!\left(E_k+\frac{\delta E}{2}\right)
    +
    a_k
    \left(
        E-E_k-\frac{\delta E}{2}
    \right),
\end{equation}
where
\begin{equation}
    a_k
    =
    \left.
    \frac{d\ln\rho(E)}{dE}
    \right|_{E=E_k+\delta E/2}.
\end{equation}
In effect this method fits $\ln \rho(E)$ with a piecewise linear function. To determine the piecewise gradients $a_k$, LLR introduces a restricted and reweighted expectation value,
\begin{equation}
    \langle\!\langle f(E)\rangle\!\rangle_{k,a}
    =
    \frac{
        \displaystyle
        \int_{E_k}^{E_k+\delta E}
        dE\,\rho(E)f(E)e^{-aE}
    }{
        \displaystyle
        \int_{E_k}^{E_k+\delta E}
        dE\,\rho(E)e^{-aE}
    }.
\end{equation}
When the trial parameter $a$ equals the local slope $a_k$, the reweighting should have approximately accounated for the variation of $\rho(E)$ within the interval. The corresponding mean energy lies at the midpoint,
\begin{equation}
    \left\langle\!\left\langle
    E-E_k-\frac{\delta E}{2}
    \right\rangle\!\right\rangle_{k,a_k}
    =0.
\end{equation}
Restricted Monte Carlo sampling is a well-known method for solving this condition, with an Robbins-Monro update employed for stability in the presence of statistical noise \cite{Langfeld2016}.

Once the slopes have been determined in enough piecewise intervals, the global density of states may be reconstructed by integrating over these intervals. For example, the logarithm of the density at adjacent interval centers satisfies
\begin{equation}
    \ln
    \frac{
        \rho\!\left(E_{k+1}+\delta E/2\right)
    }{
        \rho\!\left(E_k+\delta E/2\right)
    }
    \simeq
    \frac{\delta E}{2}
    \left(a_k+a_{k+1}\right),
\end{equation}
so that repeated application determines $\ln\rho(E)$ up to an additive constant, or equivalently $\rho(E)$ up to an overall normalization.

The precision of the LLR estimate is controlled by the piecewise interval width $\delta E$. For the illustrative Gaussian density
\begin{equation}
    \rho(E)\propto
    \exp\!\left[-\frac{(E-\mu)^2}{2\sigma^2}\right],
\end{equation}
consider an interval whose center is $\bar E_k$. Since
\begin{equation}
    \ln\rho(E)
    =
    \ln\rho(\bar E_k)
    +
    a_k(E-\bar E_k)
    -
    \frac{(E-\bar E_k)^2}{2\sigma^2},
\end{equation}
the LLR linearization omits only the final quadratic term. Its largest relative deviation within the interval occurs at the endpoints and is
\begin{equation}
    \frac{\widetilde{\rho}(E)}{\rho(E)}-1
    =
    \exp\!\left(\frac{\delta E^2}{8\sigma^2}\right)-1
    =
    \frac{\delta E^2}{8\sigma^2}
    +O\!\left(\frac{\delta E^4}{\sigma^4}\right).
\end{equation}
which makes the expected $O(\delta E^2)$ discretization bias explicit \cite{Langfeld2016}. This example is chosen because the logarithmic slope is itself linear in $E$, so integrating over the LLR intervals reconstructs $\ln\rho$ unbiasedly at the interval centers; the remaining systematic error lies between those centers, where statistical errors accumulate and are indeed correlated across the reconstruction.

%

\end{document}